\documentclass[12pt]{article}
\usepackage{amsmath, bm}
\usepackage{amssymb, amsthm}
\usepackage{authblk, algorithm, algorithmic}
\usepackage{graphicx}
\usepackage{hyperref}
\usepackage{setspace}
\usepackage{natbib}
\usepackage{url} % not crucial - just used below for the URL 
\usepackage{enumitem}
\usepackage{xcolor}
\usepackage{authblk}
\usepackage{bbm}
\usepackage{subcaption}

\usepackage{rotating}
\usepackage{geometry}
\usepackage{pdflscape}
%\pdfminorversion=4
% NOTE: To produce blinded version, replace "0" with "1" below.
\newcommand{\blind}{0}
\newtheorem{theorem}{Theorem}[section]
\newtheorem{proposition}[theorem]{Proposition}
\newtheorem{corollary}[theorem]{Corollary}

% DON'T change margins - should be 1 inch all around.
\addtolength{\oddsidemargin}{-.5in}%
\addtolength{\evensidemargin}{-.5in}%
\addtolength{\textwidth}{1in}%
\addtolength{\textheight}{1.3in}%
\addtolength{\topmargin}{-.8in}%

\newcommand{\argmin}{\operatornamewithlimits{argmin}}

\newcommand{\cG}{{\cal G}}

\begin{document}

\def\spacingset#1{\renewcommand{\baselinestretch}%
{#1}\small\normalsize} \spacingset{1}

\date{\today}
\if0\blind
{ 
  \title{\vspace{-4ex} \bf Generative Quantile Regression with \\ Variability Penalty}
\author[1]{Shijie Wang}
\author[2]{Minsuk Shin}
\author[1]{Ray Bai}

\affil[1]{\normalsize Department of Statistics, University of South Carolina, Columbia, SC 29208 \newline E-mails: \href{mailto:shijiew@email.sc.edu}{shijiew@email.sc.edu}, \href{mailto:rbai@mailbox.sc.edu}{rbai@mailbox.sc.edu}}
\affil[2]{\normalsize Gauss Labs, Palo Alto, CA 94301}

  \maketitle
} \fi

\if1\blind
{
  \begin{center}
    {\LARGE\bf Title}
\end{center}
} \fi

\begin{abstract}
Quantile regression and conditional density estimation can reveal structure that is missed by mean regression, such as multimodality and skewness. In this paper, we introduce a deep learning generative model for joint quantile estimation called Penalized Generative Quantile Regression (PGQR). Our approach simultaneously generates samples from many random quantile levels, allowing us to infer the conditional distribution of a response variable given a set of covariates. Our method employs a novel variability penalty to avoid the problem of vanishing variability, or memorization, in deep generative models. Further, we introduce a new family of partial monotonic neural networks (PMNN) to circumvent the problem of crossing quantile curves. A major benefit of PGQR is that it can be fit using a single optimization, thus bypassing the need to repeatedly train the model at multiple quantile levels or use computationally expensive cross-validation to tune the penalty parameter. We illustrate the efficacy of PGQR through extensive simulation studies and analysis of real datasets. Code to implement our method is available at \href{https://github.com/shijiew97/PGQR}{\tt https://github.com/shijiew97/PGQR}.
\end{abstract}

\noindent%
{\it Keywords:}  conditional quantile, deep generative model, generative learning, joint quantile model, neural networks, nonparametric quantile regression

\spacingset{1.5} % DON'T change the spacing!
\section{Introduction}\label{sec:intro}
Quantile regression is a popular alternative to classical mean regression \citep{koenker1982robust}. For a response variable $Y \in \mathbb{R}$ and covariates $\bm{X} = (X_1, X_2, \ldots, X_p)^\top \in \mathbb{R}^{p}$, we define the $\tau$-th conditional quantile, $\tau \in (0,1)$, of $Y$ given $\bm{X}$ as
\begin{equation} \label{quantilefn}
	Q_{Y \mid \bm{X}} (\tau) = \inf \{ y: F_{Y \mid \bm{X}} (y) \geq \tau \},
\end{equation} 
where $F_{Y \mid \bm{X}}$ is the conditional cumulative distribution function of $Y$ given $\bm X$.
For a fixed quantile level $\tau \in (0, 1)$, linear quantile regression aims to model \eqref{quantilefn} as %a linear function of $\bm{X}$, i.e. 
$Q_{Y \mid \bm{X}} (\tau) = \bm{X}^\top \bm{\beta}_{\tau}$. Linear quantile regression is less sensitive to outliers than least squares regression and is more robust when the assumption of independent and identically distributed 
 (iid) Gaussian residual
errors is violated \citep{koenker2001quantile,koenker2017handbook}. Thus, it is widely used for modeling  heterogeneous covariate-response associations.
%Numerous extensions and variants of quantile regression have been proposed. 
To reveal complex nonlinear structures, \textit{nonparametric} quantile regression has also been proposed \citep{chaudhuri2002nonparametric, li2021nonparametric, koenker1994Biometrika}. For some function class $\mathcal{F}$, nonparametric quantile regression aims to estimate $f(\bm{X}, \tau) \in \mathcal{F}$ for $Q_{Y \mid \bm{X}}(\tau) = f(\bm{X}, \tau)$ at a given quantile level $\tau$.  
% Many existing nonparametric quantile methods are based on either kernel smoothing or basis expansions. % These methods often requiring additional (generalized) cross-validation to select the bandwidth \citep{li2021nonparametric} or the roughness penalty \citep{koenker1994Biometrika}.
While quantile regression has traditionally focused on estimating $f(\bm{X}, \tau)$ a single $\tau$-th quantile, \textit{joint} quantile regression aims to estimate multiple quantile levels \textit{simultaneously} for a set of $K \geq 2$ quantiles $\{ \tau_1, \tau_2, \ldots, \tau_K \}$ % Some nonparametric methods for composite quantile regression are proposed in 
\citep{zou2008composite, jiang2012oracle, xu2017composite}. 

Despite its flexibility, nonparametric quantile regression carries the risk that the estimated curves at different quantile levels might \textit{cross} each other. For instance, the estimate of the 95th conditional quantile of $Y_i$ given covariates $\bm{X}_i$ might be smaller than the estimate of the 90th conditional quantile. This is known as the \textit{crossing quantile} phenomenon. The left two panels of Figure \ref{fig:crossdata} illustrate the crossing problem. When quantile curves cross each other, the quantile estimates violate the laws of probability and are not reasonable. To tackle this issue, many approaches have been proposed, such as constraining the model space \citep{takeuchi2006nonparametric,sangnier2016joint,moon2021learning} or constraining the model parameters \citep{meinshausen2006quantile,cannon2018non}.

Recently in the area of deep learning, neural networks have also been applied to nonparametric quantile regression. \cite{zhong2023neural} introduced a semiparametric quantile regression method where the covariates of primary interest are modeled linearly, while all other covariates are modeled nonparametrically by a neural network.
\cite{shen2021deep} proposed a deep quantile regression (DQR) estimator to approximate the target conditional quantile function with compositional structure. Under the assumption that the conditional quantile function is a composition of low-dimentional functions, DQR is shown to achieve the minimax optimal convergence rate.

Researchers have also used neural networks to learn \emph{multiple} quantiles using composite quantile loss functions. \cite{cannon2018non} proposed the monotone composite quantile regression neural network (MCQRNN) and achieved non-crossing quantiles by imposing monotone increasing bias parameters for the neural network.  \cite{moon2021learning} recently introduced an efficient and scalable $\ell_1$-penalization algorithm that also estimates non-crossing quantiles with neural networks. However, existing neural network-based approaches for joint quantile estimation are restricted at a prespecified quantile set. If only several quantiles (e.g. the interquartiles $\tau \in \{ 0.25, 0.5, 0.75 \}$) are of interest, then these methods are adequate to produce desirable inference. Otherwise, one typically has to refit the model at new quantiles \textit{or} specify a large enough quantile candidate set in order to infer the full conditional density $p(Y \mid \bm{X})$. \cite{dabney2018implicit} introduced the implicit quantile network (IQN), which takes a grid of quantile levels as inputs and approximates the conditional density at \textit{any} new quantile level. % (not just the given inputs). 
Unfortunately, IQN \textit{cannot} guarantee that quantile functions do not cross each other.  
More recently, \cite{shen2022estimation} constructed a quantile regression process similar to IQN but which avoids crossing quantiles by imposing a positivity constraint on the first derivatives.

Joint quantile regression is closely related to the problem of \textit{conditional density estimation} (CDE) of $p(Y \mid \bm{X})$. Apart from traditional CDE methods that estimate the unknown probability density curve \citep{izbicki2017converting,https://doi.org/10.48550/arxiv.1906.07177}, several deep \textit{generative} approaches besides IQN have also been proposed. \cite{zhou2022deep} introduced the generative conditional distribution sampler (GCDS), while \cite{liu2021wasserstein} introduced the Wasserstein generative conditional sampler (WGCS) for CDE. GCDS and WGCS employ the idea of generative adversarial networks (GANs) \citep{NIPS2014_5ca3e9b1} to \textit{generate samples} from the conditional distribution $p(Y \mid \bm{X})$. In these deep generative models, random noise inputs are used to learn and generate samples from the target distribution. % Deep generative methods modelled by neural network demonstrate better performance in several aspects such as capturing general shape of distribution and prediction than traditional CDE methods. 
However, a well-known problem with deep generative networks is that the generator may simply memorize the training samples instead of generalizing to new test data \citep{Arpit2017ICML, vandenBurg2021NeurIPS}. When \textit{memorization} occurs, the random noise variable no longer reflects any variability, and the predicted density $p(Y \mid \bm{X})$ approaches a point mass. We refer to this phenomenon as \textit{vanishing variability}.

To conduct joint quantile estimation as well as conditional density estimation, we propose a new deep generative approach called penalized generative quantile regression (PGQR). PGQR employs a novel variability penalty to avoid vanishing variability. % Although introduced in the context of joint quantile regression, we believe that our penalty formulation is broadly applicable to other deep generative models that may suffer from memorization. 
We further guarantee the monotonicity (i.e. non-crossing) of the estimated quantile functions from PGQR by designing a \textit{new} family of partial monotonic neural networks (PMNNs). The PMNN architecture ensures partial monotonicity with respect to quantile inputs, while retaining the expressiveness of neural networks. 

The performance of PGQR depends crucially on carefully choosing the variability penalty parameter $\lambda$, where $\lambda$ lies in the range of $[0, \lambda_{\max}]$ and $\lambda_{\max}$ should be determined before optimization. In all of our analyses, we chose $\lambda_{\max}=\exp(1)$. Unfortunately, common tuning procedures such as cross-validation are impractical for deep learning, since this would involve repetitive evaluations of the network for different training sets and choices of $\lambda$. Inspired by \cite{https://doi.org/10.48550/arxiv.2006.00767}, we construct our deep generative network in such a way that $\lambda$ is included as an additional 
random input. This way, only a \textit{single} optimization is needed to learn the neural network parameters, and then it is effortless to generate samples from a set of candidate values for $\lambda$. We propose a criterion for selecting the optimal choice of $\lambda$ to generate the final desired samples from $p(Y \mid \bm{X})$.

Our main contributions can be summarized as follows:
\begin{enumerate}
	\item We propose a deep generative approach for joint quantile regression and conditional density estimation called \textit{penalized generative quantile regression}. PGQR simultaneously generates samples from \textit{multiple} random quantile levels, thus precluding the need to refit the model at different quantiles.
	\item We introduce a novel variability penalty to avoid the vanishing variability phenomenon in deep generative models and apply this regularization technique to PGQR. 
	\item We construct a new family of partial monotonic neural networks (PMNNs) to circumvent the problem of crossing quantile curves.
	\item To facilitate scalable computation for PGQR and bypass computationally expensive cross-validation for tuning the penalty parameter, we devise a strategy that allows PGQR to be implemented using only a \textit{single} optimization.  
\end{enumerate}
The rest of the article is structured as follows. Section \ref{sec:PGQR} introduces the generative quantile regression framework and our variability penalty. In Section \ref{sec:mono}, we introduce the PMNN family for preventing quantile curves from crossing each other. Section \ref{sec:computation} discusses scalable computation for PGQR, namely how to tune the variability penalty parameter with only single-model training. In Section \ref{sec:sim}, we demonstrate the utility of PGQR through simulation studies and analyses of additional real datasets. Section \ref{sec:conclusion} concludes the paper with some discussion and directions for future research. All proofs of propositions can be found in Appendix \ref{sec:proofs}.

\section{Penalized Generative Quantile Regression} \label{sec:PGQR}
\subsection{Generative Quantile Regression} \label{sec:GQRnopenalty}
% Quantile regression assumes the conditional $\tau$-th quantile of the response variable $\mathbf{y} \in \mathbb{R}$ linearly depends on covariates $\mathbf{X} \in \mathbb{R}^p$. Quantile regression is known to be robust to outliers in that it targets any quantile of conditional distribution $\mathcal{P}(\mathbf{y}\vert\mathbf{X})$ unlike focusing on estimating conditional mean in classic regression \citep{koenker2004quantile}. Given quantile $\tau \in (0,1)$, the conditional $\tau$-th quantile of $\mathbf{y}$ given $\mathbf{X}$ is expressed as $Q_{\mathbf{y}\vert \mathbf{X}}(\tau) = \inf \{y: F_{\mathbf{y}\vert \mathbf{X}}(y) \geq \tau\}$ where $Q_{\mathbf{y}\vert \mathbf{X}}(\cdot)$ denote conditional quantile function.

Before introducing PGQR, we first introduce our framework for generative quantile regression (without variability penalty). Given $n$ training samples $(\bm{X}_1, Y_1), (\bm{X}_2, Y_2), \ldots, (\bm{X}_n, Y_n)$ and quantile level $\tau \in (0,1)$, nonparametric quantile regression minimizes the empirical quantile loss,
\begin{equation} \label{nonparametricquantile}
	\argmin_{f \in \mathcal{F}} \frac{1}{n} \sum_{i=1}^n\rho_{\tau} \left( Y_i-f(\bm{X}_i, \tau) \right),
\end{equation}
where $\rho_{\tau}(u) = u(\tau-I(u<0))$ is the check function, and $\mathcal{F}$ is a function class such as a reproducing kernel Hilbert space or a family of neural networks. Neural networks are a particularly attractive way to model the quantile function $f(\bm{X}, \tau)$ in \eqref{nonparametricquantile}, because they are universal approximators for any Lebesgue integrable function \citep{lu2017expressive}. 

In this paper, we model the quantile function in \eqref{nonparametricquantile} using deep neural networks (DNNs). We define a DNN as a neural network with at least two hidden layers and a large number of hidden neurons. We refer to \cite{EmmertStreib2020FrontiersinAI} for a comprehensive review of DNNs. Despite the universal approximation properties of DNNs, we must \textit{also} take care to ensure that our estimated quantile functions do \textit{not} cross each other. Therefore, we have to consider a wide enough \textit{monotonic} function class $\cG^m$ to cover the true $Q_{Y \mid \bm{X}}(\cdot)$. We formally introduce this class $\cG^m$ in Section \ref{sec:mono}.

Let $G$ denote the constructed DNN, which is defined as a feature map function $\{ G \in \cG^m:\mathbb{R}^{p+1} \mapsto\mathbb{R}^1 \}$ that takes $(\bm{X}_i, \tau)$ as input and generates the conditional quantile for $Y_i$ given $\bm{X}_i$ at level $\tau$ as output. We refer to this generative framework as \emph{Generative Quantile Regression} (GQR). The optimization problem for GQR is
\begin{equation}\label{eq:GQR}
	\widehat{G} = \underset{G \in \cG^m}{\argmin} \  \frac{1}{n} \sum_{i=1}^n \mathbb{E}_{\tau} \big\{ \rho_{\tau} \big(Y_i - G(\bm{X}_i, \tau) \big) \big\},
\end{equation}
where $\widehat{G}$ denotes the estimated quantile function with optimized parameters (i.e. the weights and biases) of the DNN. Note that optimizing the integrative loss $\mathbb{E}_{\tau}\{ \cdot \}$ over $\tau$ in \eqref{eq:GQR} is justified by Proposition \ref{prop:PGQRexistence} introduced in the next section (i.e. we set $\lambda=0$ in Proposition \ref{prop:PGQRexistence}).

In order to solve \eqref{eq:GQR}, we can use stochastic gradient descent (SGD) with mini-batching \citep{EmmertStreib2020FrontiersinAI}. For each mini-batch evaluation, $\bm{X}_i$ is paired with a quantile level $\tau$ sampled from $\textrm{Uniform}(0,1)$. Consequently, if there are $M$ mini-batches, the expectation in \eqref{eq:GQR} can be approximated by a Monte Carlo average of the random $\tau$'s, i.e. we approximate $\mathbb{E}_{\tau} \big\{ \rho_{\tau} \big(Y_i - G(\bm{X}_i, \tau) \big) \big\}$ with $ M^{-1} \sum_{k=1}^{M} \big\{ \rho_{\tau_k} \big( Y_i - G(\bm{X}_i, \tau_k) \big) \big\}$. Once we have solved \eqref{eq:GQR}, it is straightforward to use $\widehat{G}$ to generate \textit{new} samples $\widehat{G}(\bm{X}, \xi_k), k =1, 2, \ldots, b$, at various quantile levels $\bm{\xi} = \{ \xi_1, \xi_2, \ldots, \xi_b \}$, where the $\xi$'s are random $\textrm{Uniform}(0,1)$ noise inputs. Provided that $b$ is large enough, the generated samples at $\bm{\xi} \in (0,1)^{b}$ can be used to reconstruct the full conditional density $p(Y \mid \bm{X})$.

As discussed in Section \ref{sec:intro}, there are several other deep generative models \citep{zhou2022deep, liu2021wasserstein} for generating samples from $p(Y \mid \bm{X})$. These generative approaches also take random noise $z$ as an input (typically $z \sim \mathcal{N}(0,1)$) to reflect variability, but there is no statistical meaning for $z$. In contrast, the random noise $\tau$ in GQR \eqref{eq:GQR} has a clear interpretation as a quantile level $\tau \in (0,1)$. Although GQR and these other deep generative approaches are promising approaches for conditional sampling, these methods are all unfortunately prone to memorization of the training data. When this occurs, the random noise does not generate \textit{any} variability. To remedy this, we now introduce our \textit{variability penalty} in conjunction with our GQR loss function \eqref{eq:GQR}.

%-------------------------------------------------------------------------------
\subsection{Variability Penalty for GQR}\label{sec:pen}
 Overparameterization in neural networks occurs when the number of learnable parameters (i.e. the weights and biases) is much greater than the number of training samples. In practice, it is common for a DNN to be overparameterized. Empirical and theoretical studies have shown that overparameterization improves the generalization and robustness of DNNs, since it greatly enhances the representation power of the DNN and simplifies the optimization landscape \citep{allen2019learning, zhang2021understanding, Soltanolkotabi2019, MontanariZhong2022}. Unfortunately, when a DNN is overparameterized to capture the underlying structure in the training data, the random noise in GQR is likely to reflect no variability, no matter what value it inputs. This is a common problem in deep generative models \citep{Arpit2017ICML, Arora2017, vandenBurg2021NeurIPS}. We refer this phenomenon as \textit{vanishing variability}. To be more specific, let $G(\cdot, z)$ be the generator function constructed by a DNN, where $z$ is a random noise variable following some reference distribution such as a standard Gaussian or a standard uniform. The vanishing variability phenomenon occurs when
\begin{equation}\label{eq:overfit}
	\widehat{G}(\bm{X}_i,z) = Y_i, \hspace{.2cm} i = 1,...,n.
\end{equation}
In other words, there is no variability when inputting different random noise $z$, generating almost surely a discrete point mass at the training data. Given a \textit{new} feature vector $\bm{X}_{\textrm{new}}$, $G(\bm{X}_{\textrm{new}}, z)$ can also only generate one novel sample from the true data distribution because of vanishing variability.
%Vanishing variability may result in large test error, in spite of small training error. % \textcolor{blue}{SW: I think it results in very small error.}

Since GQR takes the training data $\bm{X}\in \mathbb{R}^{p}$ as input and the target quantile estimate lies in $\mathbb{R}^1$, the weights in the DNN associated with $\bm{X}\in \mathbb{R}^{p}$ are very likely to overwhelm those associated with the noise input $\tau$. As a result, GQR is very prone to encountering vanishing variability, as are other generative approaches with multidimensional features. From another point of view, we can see that due to the nonnegativity of the check function, the GQR loss \eqref{eq:GQR} achieves a minimum value of zero when we have vanishing variability \eqref{eq:overfit}. 

To remedy this problem, we propose a new regularization term that encourages the network to have \textit{more} variability when vanishing variability occurs. Given a set of features $\bm{X} \in \mathbb{R}^{p}$, the proposed \textit{variability penalty} is formulated as follows:
\begin{equation}\label{eq:pen}
	\textrm{pen}_{\lambda, \alpha}(G(\bm{X},\tau),G(\bm{X},\tau^\prime)) = - \lambda \log\left \{ \Vert G(\bm{X},\tau) - G(\bm{X},\tau^\prime)\Vert_1 + 1/\alpha\right\},
\end{equation}
where $\lambda \geq 0$ is the hyperparameter controlling the degree of penalization, $\alpha>0$ is a fixed hyperparameter, and $\tau,\tau^\prime \sim \textrm{Uniform}(0,1)$. The addition of $1 / \alpha$ inside the logarithmic term of \eqref{eq:pen} is mainly to ensure that the penalty function is always well-defined (i.e. the quantity inside of $\log \{ \cdot \}$ of \eqref{eq:pen} cannot equal zero). Our sensitivity analysis in Appendix \ref{sec:furtheranalysis} shows that our method is not usually sensitive to the choice of $\alpha$. We find that fixing $\alpha=1$ or $\alpha=5$ works well in practice.

With the addition of the variability penalty \eqref{eq:pen} to our GQR loss function \eqref{eq:GQR}, our \textit{penalized} GQR (PGQR) method solves the optimization,
\begin{equation}\label{eq:PGQR}
	\widehat{G} = \underset{G \in \cG^{m}}{\argmin} \frac{1}{n} \sum_{i=1}^n\left[ \mathbb{E}_{\tau} \left\{ \rho_{\tau} \big(Y_i - G(\bm{X}_i, \tau) \big)\right\} + \mathbb{E}_{\tau, \tau^\prime} \left\{\text{pen}_{\lambda, \alpha}\left( G(\bm{X}_i,\tau), G(\bm{X}_i, \tau^\prime) \right) \right\}\right],
\end{equation}
where $\cG^{m}$ denotes the PMNN family introduced in Section \ref{sec:mono}, and $\tau$ and $\tau^\prime$ independently follow a uniform distribution on $(0,1)$. The expectation $\mathbb{E}_{\tau, \tau^\prime}$ is again approximated by a Monte Carlo average. Note that when $\lambda = 0$, the PGQR loss \eqref{eq:PGQR} reduces to the (non-penalized) GQR loss \eqref{eq:GQR}. 

The next proposition states that an estimated generator function $\widehat{G}(\bm{X}, \tau)$ under the PGQR objective \eqref{eq:PGQR} is equivalent to a neural network estimator $\widehat{g}_{\tau}(\bm{X})$ based on the individual (non-integrative) quantile loss function, provided that the family $\mathcal{G}^{m}$ in \eqref{eq:PGQR} is large enough. % there \textit{exists} a local minimizer $\widehat{G}$ which solves the PGQR objective \eqref{eq:PGQR}. 

\begin{proposition}[equivalence of $\widehat g_\tau({\bm X})$ and $\widehat G({\bm X},\tau)$] \label{prop:PGQRexistence}
	For fixed $\lambda \geq 0$ and $\alpha > 0$, let $\widehat g_{\tau}(\bm{X}) =\argmin_{g \in \mathcal{H} } \sum_{i=1}^n  \left[\rho_\tau(Y_i-g(\bm{X}_i)) + \mathbb{E}_{\tau, \tau^\prime} \{ \text{\emph{pen}}_{\lambda, \alpha}(g(\bm{X}_i,\tau),g(\bm{X}_i,\tau^\prime)) \}\right]$,
	for a class of neural networks $\mathcal{H}$, and $ \tau,\tau^\prime\overset{\text{iid}}\sim \text{Uniform}(0,1)$. Consider a class of generator functions $\mathcal{G}$, where $\{ G \in \mathcal{G} :\mathbb{R}^p\times\mathbb{R}\times \mathbb{R
	}\mapsto \mathbb{R}\}$.
	Assume that for all $\bm{X} \in\mathcal{X} \subset \mathbb{R}^p$ and quantile levels $\tau \in (0,1)$, there exists $G \in\mathcal{G}$ such that the target neural network $\widehat g_{\tau}(\bm{X})$ can be represented by a hyper-network $G( \bm{X},\tau)$. Then, for $i=1,\dots,n$,
	$$
	\widehat g_{\tau}(\bm{X}_i)=\widehat G(\bm{X}_i,\tau) \: \text{ a.s.},
	$$
	with respect to the probability law related to  $\mathbb{E}_{\tau}$, where $\widehat{G}$ is a solution to \eqref{eq:PGQR}.
\end{proposition}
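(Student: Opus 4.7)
My plan is to rewrite the PGQR objective as an expectation over $\tau$ of a $\tau$-sliced criterion, use the representability hypothesis to lift the pointwise-in-$\tau$ minimizers $\widehat{g}_\tau$ into a single hyper-network lying in the generator class $\mathcal{G}$, and finally argue that this hyper-network and the joint minimizer $\widehat{G}$ must coincide on the training inputs, $\tau$-almost surely.

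First I would apply Fubini to rewrite \eqref{eq:PGQR} as
$$\mathbb{E}_\tau\Bigl[\tfrac{1}{n}\sum_{i=1}^n \rho_\tau\bigl(Y_i - G(\bm{X}_i,\tau)\bigr) + \tfrac{1}{n}\sum_{i=1}^n \mathbb{E}_{\tau'}\bigl\{\text{pen}_{\lambda,\alpha}(G(\bm{X}_i,\tau),G(\bm{X}_i,\tau'))\bigr\}\Bigr] = \mathbb{E}_\tau[L_\tau(G)],$$
where $L_\tau(G)$ denotes the bracketed $\tau$-slice functional. By construction, $L_\tau$ is (up to the $1/n$ factor) the same objective that $\widehat{g}_\tau$ minimizes over $\mathcal{H}$. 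Defining the hyper-network $G^\star(\bm{X},\tau) := \widehat{g}_\tau(\bm{X})$, the representability assumption places $G^\star$ inside $\mathcal{G}$. Since $G^\star(\cdot,\tau)$ attains $\min_{g \in \mathcal{H}} L_\tau(g)$ for every $\tau$, the pointwise inequalities integrate to $\mathbb{E}_\tau L_\tau(G^\star) \leq \mathbb{E}_\tau L_\tau(G)$ for all $G \in \mathcal{G}$, so $G^\star$ is itself a minimizer of the PGQR objective.

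Next I would compare $G^\star$ with $\widehat{G}$. Because both minimize the same integrated loss, $\mathbb{E}_\tau L_\tau(\widehat{G}) = \mathbb{E}_\tau L_\tau(G^\star)$, and the nonnegative integrand $L_\tau(\widehat{G}) - L_\tau(G^\star)$ must vanish for $\tau$-almost every $\tau \in (0,1)$. On this full-measure set, $\widehat{G}(\cdot,\tau)$ attains the infimum of $L_\tau$, so it is a valid pointwise minimizer in the sense of $\widehat{g}_\tau$. The conclusion $\widehat{G}(\bm{X}_i,\tau) = \widehat{g}_\tau(\bm{X}_i)$ then follows at the training points once the minimizing value at each $\bm{X}_i$ is identified.

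The main obstacle is precisely this last uniqueness-at-data-points step, since $\rho_\tau$ is only convex (not strictly convex) and one might in principle have multiple networks achieving the same $L_\tau$ value. I would handle it by observing that for a finite sample the pointwise check-loss minimum at each $\bm{X}_i$ corresponds to a sample quantile, which is unique for Lebesgue-almost every $\tau \in (0,1)$ (failing only on the at-most-finite jump set of the empirical conditional c.d.f.), and that the strictly concave logarithmic variability penalty further pins down the outputs $G(\bm{X}_i,\tau)$ among networks attaining the same slice loss. Combining this pointwise uniqueness with the $\tau$-a.s. equality of slice losses yields $\widehat{G}(\bm{X}_i,\tau) = \widehat{g}_\tau(\bm{X}_i)$ almost surely with respect to the law of $\tau$, which is the stated claim.
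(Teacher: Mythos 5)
Your argument is essentially the paper's own proof run forwards rather than by contradiction: the paper supposes $\widehat{G}(\bm{X}_{i^*},\tau)\neq\widehat{g}_\tau(\bm{X}_{i^*})$ on a positive-measure set of $\tau$, patches $\widehat{G}$ with $\widehat{g}_\tau$ on that set, and contradicts optimality of $\widehat{G}$ --- which is precisely your observation that the nonnegative slice-loss gap $L_\tau(\widehat{G})-L_\tau(G^\star)$ integrates to zero and hence vanishes for $\tau$-almost every $\tau$. One caveat: your closing appeal to sample-quantile uniqueness is misapplied here (for a rich network class with distinct $\bm{X}_i$ the unpenalized per-$\tau$ check loss is minimized at $g(\bm{X}_i)=Y_i$ for \emph{every} $\tau$, not at a pooled sample quantile with an at-most-finite exceptional set), but the paper's proof silently assumes exactly what you are trying to establish in that step --- uniqueness of the minimizing values at the data points --- so your attempt is, if anything, more explicit about where the residual gap lies.
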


%\begin{proof}
%	See Appendix \ref{sec:proofs}. %{\color{blue} \textbf{[RB: Minsuk, please add the proof to the Appendix]}}
% \end{proof}

Proposition \ref{prop:PGQRexistence} justifies optimizing the \textit{integrative} quantile loss over $\tau$ in the PGQR objective \eqref{eq:PGQR}. The next proposition justifies adding the variability penalty \eqref{eq:pen} to the GQR loss \eqref{eq:GQR} by showing that there exists $\lambda>0$ so that memorization of the training data does \textit{not} occur under PGQR. 

\begin{proposition}[PGQR does not memorize the training data] \label{prop:nomemorization}
Suppose that $\alpha > 0$ is fixed in the variability penalty \eqref{eq:pen}. Denote any minimizer of the PGQR objective function \eqref{eq:PGQR} by $\widehat{G}$. Then, there exists $\lambda>0$ such that
\begin{align*}
	\widehat{G}(\bm{X}_i, \tau) \neq Y_i \hspace{.2cm} \textrm{for at least one } i =  1, \ldots, n.
\end{align*}
\end{proposition}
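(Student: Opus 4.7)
The plan is to produce an explicit non-memorizing competitor whose PGQR objective is strictly smaller than that of any memorizer once $\lambda$ is taken sufficiently large. The key observation is that the check-loss term is bounded below by $0$, while the penalty term rewards spread of $G(\bm{X}_i, \cdot)$ across the quantile direction; a memorizer therefore pays a floor of $\lambda \log \alpha$ in the penalty that a mildly dispersed network can undercut.

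First, I would evaluate \eqref{eq:PGQR} at any hypothetical memorizer $G^{\ast}$, i.e., any $G^{\ast} \in \mathcal{G}^{m}$ with $G^{\ast}(\bm{X}_i, \tau) = Y_i$ for every $i$ and almost every $\tau$. Each integrated check contribution $\mathbb{E}_{\tau} \rho_{\tau}(Y_i - G^{\ast}(\bm{X}_i, \tau))$ vanishes, and the penalty term collapses to $-\lambda \log(1/\alpha) = \lambda \log \alpha$ since $G^{\ast}(\bm{X}_i, \tau) - G^{\ast}(\bm{X}_i, \tau') \equiv 0$. Hence the PGQR objective at $G^{\ast}$ equals exactly $\lambda \log \alpha$, independent of the specific memorizer chosen.

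Next, I would construct a competitor $\widetilde{G} \in \mathcal{G}^{m}$ of the form $\widetilde{G}(\bm{X}_i, \tau) = Y_i + \epsilon\, h(\tau)$ for a monotone function $h$ such as $h(\tau) = \tau - 1/2$ and a small $\epsilon > 0$; monotonicity in $\tau$ keeps $\widetilde{G}$ inside the PMNN class $\mathcal{G}^{m}$. Its check-loss contribution is some finite $L(\epsilon)$ independent of $\lambda$, and its penalty contribution is $\lambda\, Q(\epsilon)$ with
\begin{equation*}
Q(\epsilon) \;=\; -\,\mathbb{E}_{\tau,\tau'}\!\left[\log\!\left(\epsilon\,|h(\tau) - h(\tau')| + 1/\alpha\right)\right] \;<\; \log \alpha,
\end{equation*}
the strict inequality holding because $\epsilon|h(\tau) - h(\tau')|$ is strictly positive on a set of positive $(\tau,\tau')$-measure and $\log$ is strictly increasing. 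Therefore any $\lambda > L(\epsilon)/(\log \alpha - Q(\epsilon))$ yields $L(\epsilon) + \lambda\, Q(\epsilon) < \lambda \log \alpha$, so $\widetilde{G}$ strictly improves on every memorizer, forcing any PGQR minimizer $\widehat{G}$ at that $\lambda$ to be non-memorizing.

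The only delicate point is confirming that $\widetilde{G}$ genuinely lies in $\mathcal{G}^{m}$. This reduces to the very mild expressiveness requirement that the PMNN family of Section \ref{sec:mono} contains small monotone perturbations of functions interpolating the training responses, which is immediate from the universal approximation properties of PMNNs. With that ingredient in hand, the comparison above produces the desired $\lambda > 0$ and completes the proof.
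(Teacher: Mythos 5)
Your proof is correct and follows essentially the same route as the paper's: both compare the memorizer's objective value $\lambda\log\alpha$ (zero check loss plus maximal penalty) against a generator with positive variability in $\tau$, concluding that for sufficiently large $\lambda$ a memorizer cannot be the minimizer. Your version is in fact more explicit than the paper's --- you construct the competitor $Y_i+\epsilon(\tau-1/2)$ and derive the threshold $\lambda > L(\epsilon)/(\log\alpha - Q(\epsilon))$, whereas the paper merely asserts that a positive-variance generator attains a smaller loss for an appropriate $\lambda$; both arguments rest on the same implicit expressiveness assumption on $\mathcal{G}^m$ that you correctly flag at the end.
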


% \begin{proof}
%	See Appendix \ref{sec:proofs}. 
% \end{proof}

\begin{figure}[t]
\centering
\includegraphics[width=0.6\textwidth]{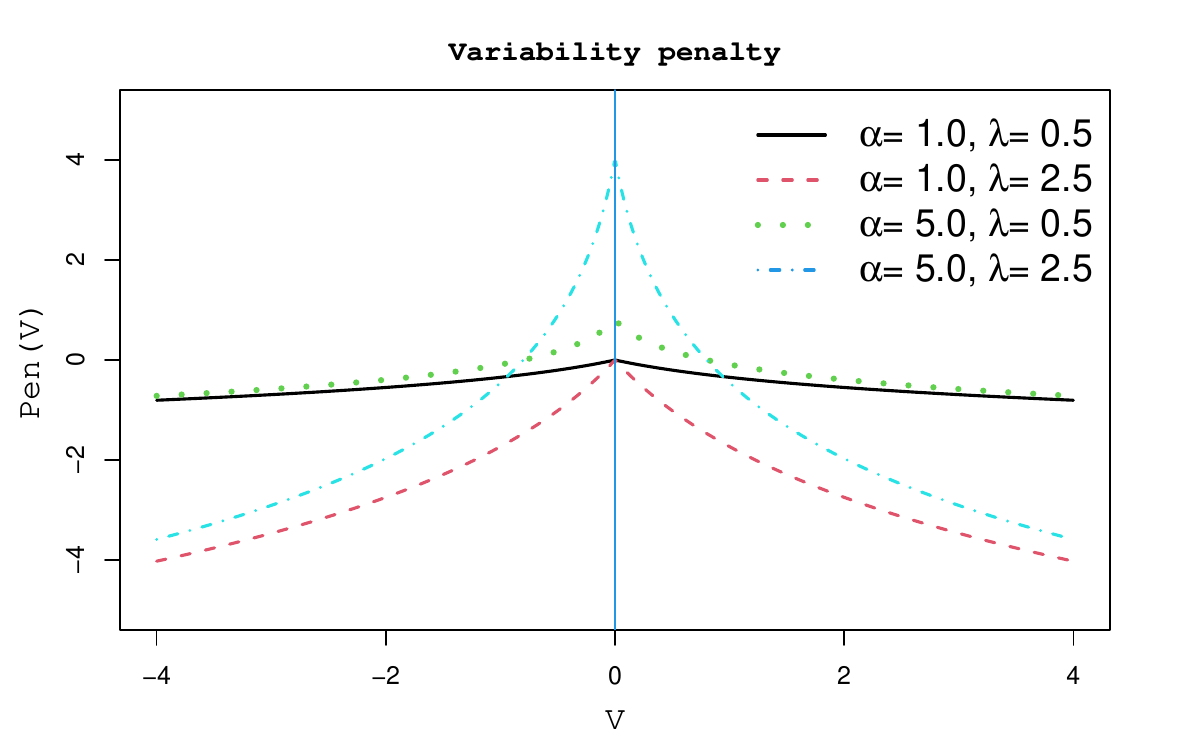}
\caption{\small A plot of the penalty function $\textrm{pen}_{\lambda, \alpha}(V) = -\lambda \log ( \lvert V \rvert + 1/\alpha) $.}
\label{fig:penalty}
\end{figure}

We now provide some intuition into the variability penalty \eqref{eq:pen}. We can quantify the amount of variability in the network by $V := G(\bm{X},\tau) - G(\bm{X},\tau^\prime)$. Clearly, $V=0$ when vanishing variability \eqref{eq:overfit} occurs. To avoid data memorization, we should thus penalize $V$ \textit{more heavily} when $V \approx 0$, with the maximum amount of penalization being applied when $V = 0$. Figure \ref{fig:penalty} plots the variability penalty, $\textrm{pen}_{\lambda, \alpha}(V) = -\lambda \log ( \lvert V \rvert + 1/\alpha)$, for several pairs of $(\alpha, \lambda)$. Figure \ref{fig:penalty} shows that $\textrm{pen}_{\lambda, \alpha}(V)$ is sharply peaked at zero and strictly convex decreasing in $\lvert V \rvert$. Thus, maximum penalization occurs when $V = 0$, and there is less penalization for larger $\lvert V \rvert$. As $\lambda$ increases, $\textrm{pen}_{\lambda, \alpha}(V)$ also increases for all values of $V \in (-\infty, \infty)$, indicating that larger values of $\lambda$ lead to more penalization.

Our penalty function takes a specific form \eqref{eq:pen}. However, any other penalty on $ G(\bm{X}, \tau) - G(\bm{X}, \tau^\prime)$ for $\bm{X} \in \mathcal X$ that has a similar shape as the PGQR penalty (i.e. sharply peaked around zero, as in Figure \ref{fig:penalty}) would also conceivably encourage the network to have greater variability. Another way to control the variability is to directly penalize the empirical variance $s^2 = (n-1)^{-1} \sum_{i=1}^{n} [G(\bm{X}_i, \tau) - \bar{G}]^2$, where $\bar{G} = n^{-1} \sum_{i=1}^{n} G(\bm{X}_i, \tau)$, or the empirical standard deviation $s = \sqrt{s^2}$. However, penalties on $s^2$ or $s$ do \textit{not} have an additive form and individual penalty terms depend on each other, and therefore, these are \textit{not} conducive to SGD-based optimization. 

We now pause to highlight the novelty of our constructed variability penalty. In many other nonparametric models, e.g. those based on nonparametric smoothing, a roughness penalty is often added to an empirical loss function in order to control the smoothness of the function or density estimate \citep{GreenSilverman1994}. These roughness penalties encourage greater smoothness of the target function in the spirit of reducing variance in the bias-variance trade-off. In contrast, our variability penalty \eqref{eq:pen} directly penalizes generators with \textit{too small} variance. By adding the penalty to the GQR loss \eqref{eq:GQR}, we are encouraging the estimated network to have \textit{more} variability.

\begin{figure}[t]
\centering
\includegraphics[width=1.0\textwidth]{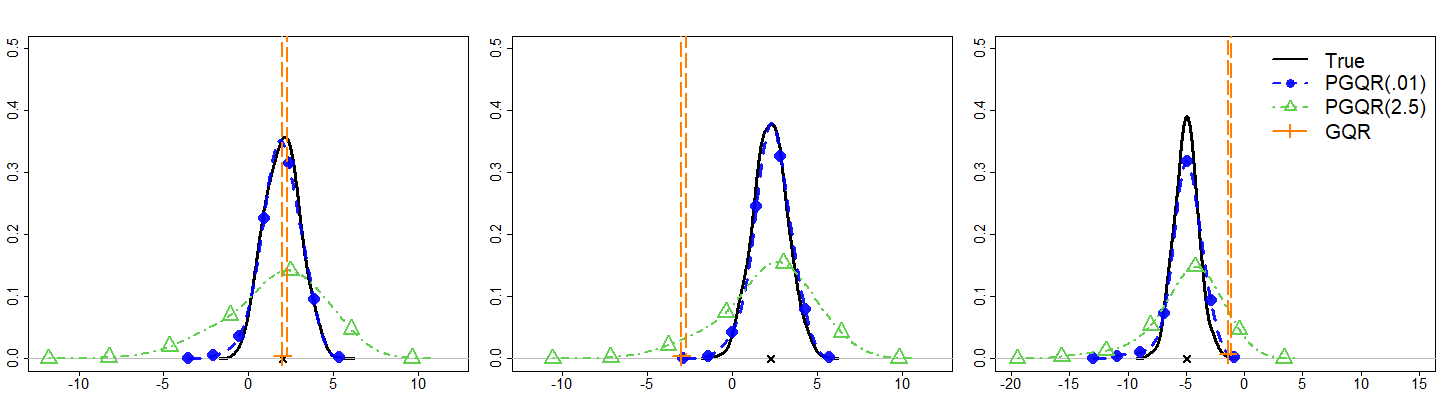}
\caption{\small Plots of the estimated conditional densities $p(Y \mid \bm{X}_{\text{test}})$ for three different test observations under non-penalized GQR and PGQR with two different choices of $\lambda \in \{ 0.01, 2.5 \}$. $\lambda^{\star} = 0.01$ is the optimal $\lambda$ chosen by our tuning parameter selection method in Section \ref{sec:select}.}
\label{fig:overfit}
\end{figure}

To illustrate how PGQR prevents vanishing variability, we carry out a simple simulation study under the model, $Y_i = \bm{X}_i^\top \boldsymbol{\beta} + \epsilon_i, i = 1, \ldots, n$,
where $\bm{X}_i \overset{iid}{\sim} \mathcal{N}(\boldsymbol{0},\bm{I}_{20})$, $\epsilon_i \overset{iid}{\sim} \mathcal{N}(0,1)$, and the coefficients in $\boldsymbol{\beta}$ are equispaced over $[-2, 2]$. We used 2000 samples to train GQR \eqref{eq:GQR} and PGQR \eqref{eq:PGQR}, and an additional 200 validation samples were used for tuning parameter selection of $\lambda$ (described in Section \ref{sec:select}). To train the generative network, we fixed $\alpha=1$ and tuned $\lambda$ from a set of 100 equispaced values between $0$ and $2.5$.  We then generated 1000 samples $\{ \widehat{G}(\bm{X}_{\text{test}},\xi_k,\lambda) \}_{k=1}^{1000}$, $\xi_k \overset{iid}{\sim} \textrm{Uniform}(0,1)$, from the estimated conditional density of $p(Y \mid \bm{X}_{\text{test}})$ for three different choices of out-of-sample test data for $\bm{X}_{\text{test}}$. In our simulation, the tuning parameter selection procedure introduced in Section \ref{sec:select} chose an optimal $\lambda$ of $\lambda^{\star} = 0.01$. 

As shown in Figure \ref{fig:overfit}, the \textit{non}-penalized GQR model introduced in Section \ref{sec:GQRnopenalty} (dashed orange line) suffers from vanishing variability. In particular, GQR is constructed by a neural network of three hidden layers, each layer having 1000 hidden neurons. Namely, GQR generates values near the true test sample $Y_{\text{test}}$ almost surely, despite the fact that we used 1000 different inputs for $\xi$ to generate the $\widehat{G}$ samples. In contrast, the PGQR model with optimal $\lambda^{\star} = 0.01$ (solid blue line with filled circles) approximates the true conditional density (solid black line) very closely, capturing the Gaussian shape and matching the true underlying variance. 

On the other hand, Figure \ref{fig:overfit} also illustrates that if $\lambda$ is chosen to be \textit{too} large in PGQR \eqref{eq:PGQR}, then the subsequent approximated conditional density will exhibit larger variance than it should. In particular, when $\lambda = 2.5$, Figure \ref{fig:overfit} shows that PGQR (dashed green line with hollow triangles) \textit{overestimates} the true conditional variance for $Y$ given $\bm{X}$. This demonstrates that the choice of $\lambda$ in the variability penalty \eqref{fig:penalty} plays a very crucial role in the practical performance of PGQR. In Section \ref{sec:select}, we describe how to select an ``optimal'' $\lambda$ so that PGQR neither underestimates \textit{nor} overestimates the true conditional variance. Our method for tuning $\lambda$ also requires only a \textit{single} optimization, making it an attractive and scalable alternative to cross-validation.

 As discussed earlier, the quantile loss function \eqref{eq:GQR} is equal to zero when the estimated quantiles are exactly equal to the observed responses $Y$'s. Therefore, if a complex model like a DNN is used for $G(\bm{X}, \tau)$ in \eqref{eq:GQR}, the estimated model is more prone to interpolate the observed responses. It is natural to wonder whether picking a simpler neural network (i.e. one with fewer hidden layers and/or neurons) will remedy the vanishing variability phenomenon. In Appendix \ref{sec:furtheranalysis}, we show that tuning the model complexity for \emph{non}-penalized GQR (e.g. choosing a simpler neural network with one or two hidden layers and five or 50 neurons) does \emph{not} completely avoid vanishing variability. A simpler neural network structure also invariably loses representation power. Our PGQR model \eqref{eq:PGQR} allows us to realize the benefits of overparameterization \citep{allen2019learning, zhang2021understanding, Soltanolkotabi2019, MontanariZhong2022} while \emph{simultaneously} avoiding vanishing variability.

%------------------------------------------------------------------------------------------
\section{Partial Monotonic Neural Network}\label{sec:mono}
Without any constraints on the network, PGQR is just as prone as other unconstrained nonparametric quantile regression models to suffer from crossing quantiles. For a fixed data point $\bm{X}_i$ and an estimated network $\widehat{G}$, the \textit{crossing problem} occurs when 
\begin{equation}\label{eq:cross}
\widehat{G}(\bm{X}_i,\tau_1) > \widehat{G}(\bm{X}_i,\tau_2) \:\text{ when }\: 0<\tau_1<\tau_2<1.
\end{equation}
In the neural network literature, one popular way to address the crossing problem is to add a penalty term to the loss such as $\max(0, -\partial G(\bm{X}_i;\tau)/\partial \tau)$ \citep{tagasovska2018frequentist, liu2020certified, shen2022estimation}. Through regularization, the network is encouraged to have larger partial derivatives with respect to $\tau$. % However, adding this penalty merely alleviates the crossing problem. There is no guarantee that the final model is free of crossing quantiles \citep{tagasovska2018frequentist}, \textit{or} we may have to keep training the model until the final network is certified to be monotonic with respect to $\tau$ at the expense of heavy computational cost \citep{liu2020certified}. 
Another natural way to construct a monotonic neural network (MNN) is by restricting the weights of network to be nonnegative through a transformation or through weight clipping \citep{832655, daniels2010monotone,mikulincer2022size}. Because \textit{all} the weights are constrained to be nonnegative, this class of MNNs might require longer optimization (i.e. more epochs in SGD) and/or a large amount of hidden neurons to ensure the network's final expressiveness.

Instead of constraining \textit{all} the weights in the neural network, we make a simple modification to the MNN architecture which we call the \textit{partial} monotonic neural network (PMNN). The PMNN family consists of \textit{two} feedforward neural networks (FNN) as sub-networks which divide the input into two segments. The first sub-network is a weight-constrained quantile network $\mathbb{R}^1 \mapsto \mathbb{R}^h$ where the only input is the quantile level $\tau$ and $h$ denotes the number of hidden neurons. The FNN structure for one hidden layer in this sub-network is $g(\tau) = \sigma(\bm{U}_{\text{pos}} \tau+\bm{b})$, where $\bm{U}_{\text{pos}}$ is the weights matrix of \textit{nonnegative} weights, $\bm{b}$ is the bias matrix, and $\sigma$ is the activation function, such as the rectified linear unit (ReLU) function $\sigma(x) = \max \{0, x \}$. The $K_1$-layer constrained sub-network $g_c$ can then be constructed as $g_{c}(\tau) = g_K \circ \dots \circ g_1$.

The second sub-network is a $K_2$-layer \textit{unconstrained} network $g_{uc}: \mathbb{R}^p \mapsto \mathbb{R}^h$, taking the data $\bm{X}$ as inputs. The structure of one hidden layer in this sub-network is $g^{\prime}(\bm{X}) = \sigma(\bm{U} \bm{X} +\bm{b})$, where $\bm{U}$ is an \textit{unconstrained} weights matrix. We construct $g_{uc}$ as $g_{uc}(\bm{X}) = g_{K_2}^{\prime} \circ \dots \circ g_1^{\prime}$. Finally, we construct a single weight-constrained connection layer $f:\mathbb{R}^h \mapsto \mathbb{R}^1$ that connects the two sub-networks to quantile estimators,
\begin{equation}\label{eq:pmmn}
G(\bm{X}, \tau) = f \ \circ (g_c + g_{uc}).
\end{equation}
With our PMNN architecture, the monotonicity of $G(\cdot,\tau)$ with respect to $\tau$ is guaranteed by the positive weights in the quantile sub-network $g_c$ and the final layer $f$. At the same time, since the data sub-network $g_{uc}$ is unconstrained in its weights, $g_{uc}$ can learn the features of the training data well. In particular, the PMNN family is more flexible in its ability to learn features of the data and is easier to optimize than MNN because of this unconstrained sub-network. We use our proposed PMNN architecture as the family $\cG^{m}$ over which to optimize the PGQR objective for all of the simulation studies and real data analyses in this manuscript.

\begin{figure}[t]
\centering
\includegraphics[width=0.85\textwidth]{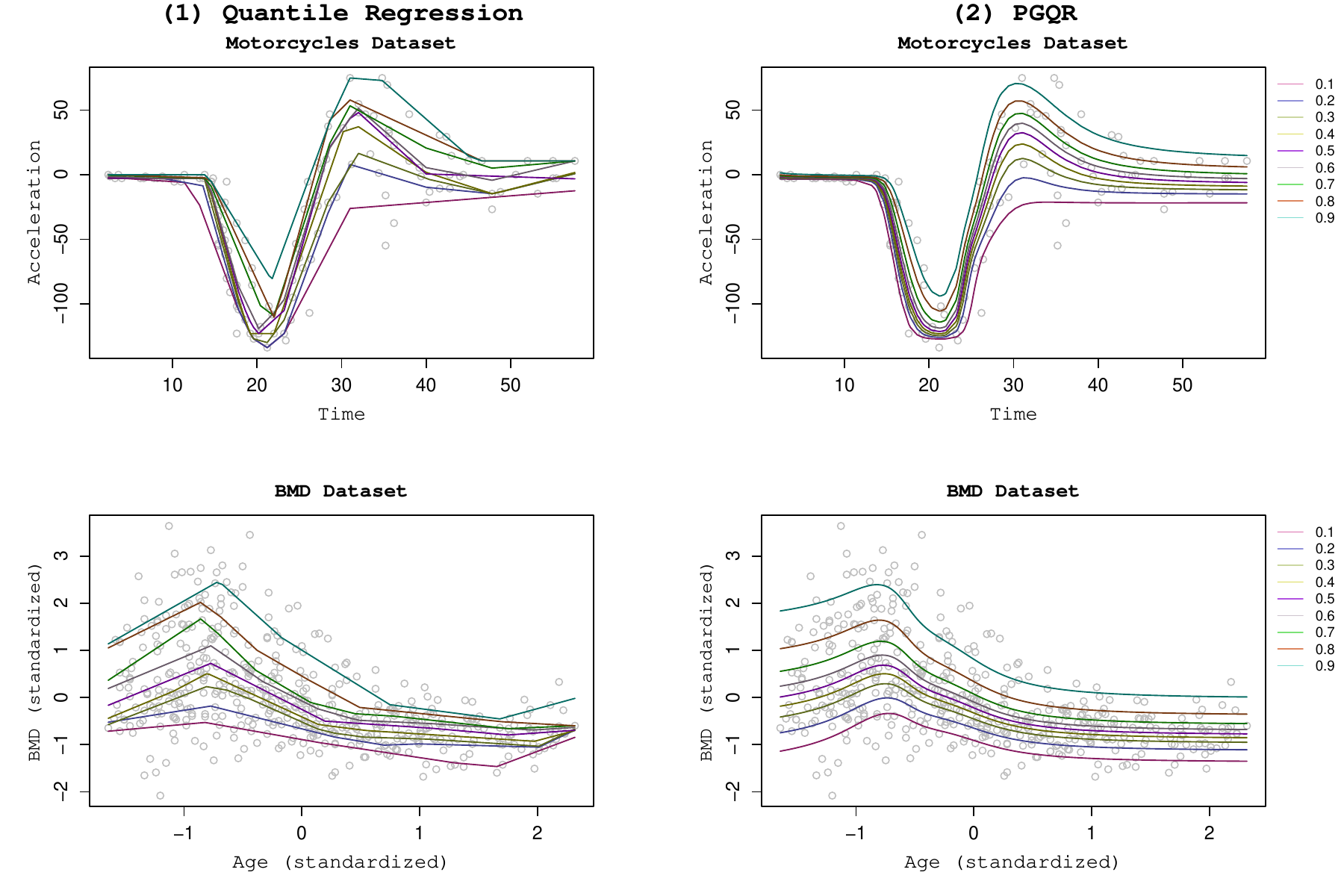}
\caption{\small Estimated quantile curves at levels $\tau \in \{ 0.1, 0.2, 0.3 , 0.4, 0.5, 0.6, 0.7, 0.8, 0.9\}$ for the motorcycle and BMD datasets. The left two panels plot the estimated curves for unconstrained nonparametric quantile regression, and the right two panels plot the estimated curves for PGQR with the PMNN family.}
\label{fig:crossdata}
\end{figure}

To investigate the performance of our proposed PMNN family of neural networks, we fit the PGQR model with PMNN as the function class $\cG^m$ on two benchmark datasets. The first dataset is the motorcycles data \citep{silverman1985some} where the response variable is head acceleration (in g) and the predictor is time from crash impact (in ms). The second application is a bone mineral density (BMD) dataset  \citep{takeuchi2006nonparametric} where the response is the standardized relative change in spinal BMD in adolescents and the predictor is the standardized age of the adolescents. For these two datasets, we estimated the quantile functions at different quantile levels $\tau \in \{ 0.1, 0.2, 0.3 , 0.4, 0.5, 0.6, 0.7, 0.8, 0.9\}$ over the domain of the predictor. We compared PGQR under PMNN to the unconstrained nonparametric quantile regression approach implemented in the \textsf{R} package \texttt{quantreg}. The left two panels of Figure \ref{fig:crossdata} demonstrate that the estimated quantile curves under \textit{unconstrained} nonparametric quantile regression are quite problematic for both the motorcycle and BMD datasets; the quantile curves cross each other at multiple points in the predictor domain. In comparison, the right two panels of Figure \ref{fig:crossdata} show that PGQR with the PMNN family ensures \textit{no} crossing quantiles for either dataset.

% \textcolor{blue}{RB: We should also illustrate that we do \textbf{not} have the crossing problem with this modification to the MNN architecture on the two benchmark datasets (motorcycle and bone mineral density).} %Also, should we swap Sections 2.2 and 2.3? I wonder if it makes sense to put the variability penalization section right after Section 2.1, and then we can say: ``The family of neural networks $\mathcal{G}$ that we consider is described in Section 2.3''? This is because the variability penalty is what we are selling as one of the chief novelties, so it might be a good idea to \textbf{present it earlier in Section 2}?

%-------------------------------------------------------------------------------
\section{Scalable Computation for PGQR}\label{sec:computation}

\subsection{Single-Model Training for PGQR} \label{sec:singlemodeltraining}
As illustrated in Section \ref{sec:pen}, PGQR's performance depends greatly on the regularization parameter $\lambda \geq 0$ in the variability penalty \eqref{eq:pen}. If $\lambda$ is too small or if $\lambda = 0$, then we may underestimate the true conditional variance of $p(Y \mid \bm{X})$ and/or encounter vanishing variability. On the other hand, if $\lambda$ is too large, then we may overestimate the conditional variance. Due to the large number of parameters in a DNN, tuning $\lambda$ would be quite burdensome if we had to repeatedly evaluate the deep generative network $G(\cdot,\tau)$ for multiple choices of $\lambda$ and training sets.  In particular, using cross-validation to tune $\lambda$ is infeasible for DNNs, especially if the size of the training set is large. 

Inspired by the idea of \emph{Generative Multiple-purpose Sampler} (GMS) \citep{https://doi.org/10.48550/arxiv.2006.00767}, we instead propose to \textit{include} $\lambda$ as an additional input in the generator, along with data $\bm{X}$ and quantile $\tau$. In other words, we impose a discrete uniform distribution $p(\lambda)$ for $\lambda$ whose support $\Lambda$ is a grid of candidate values for $\lambda$. We then estimate the network $G(\cdot,\tau,\lambda)$ with the modified PGQR loss function,
\begin{equation}\label{eq:PGQR2}
\widehat{G} = \underset{G \in \cG^m}{\argmin} \  \frac{1}{n} \sum_{i=1}^n \mathbb{E}_{\tau, \lambda} \bigg[ \rho_{\tau} \big(Y_i - G(\bm{X}_i, \tau, \lambda) \big) + \mathbb{E}_{\tau, \tau'} \left\{ \text{pen}_{\lambda, \alpha} \left( G(\bm{X}_i, \tau, \lambda), G(\bm{X}_i, \tau^\prime, \lambda)  \right) \right\} \bigg].
\end{equation}
Note that \eqref{eq:PGQR2} differs from the earlier formulation \eqref{eq:PGQR} in that $\lambda \in \Lambda$ is \textit{not} fixed in advance. In the PMNN family $\cG^m$, $\lambda$ is included in the unconstrained sub-network $g_{uc}: \mathbb{R}^{p+1} \mapsto \mathbb{R}^h$. 

The next corollary to Proposition \ref{prop:PGQRexistence} justifies including $\lambda$ in the generator and optimizing the integrative loss over $\{ \tau, \lambda \}$ in the modified PGQR objective \eqref{eq:PGQR2}. The proof of Corollary \ref{corollary:existence} is a straightforward extension of the proof of Proposition \ref{prop:PGQRexistence} and is therefore omitted. 

\begin{corollary} \label{corollary:existence} Let $\widehat g_{\tau,\lambda}(\bm{X})=\argmin_{g \in \mathcal{H} } \sum_{i=1}^n\left[ \rho_\tau(Y_i-g(\bm{X}_i)) + \mathbb{E}_{\tau, \tau'}  \{
\textrm{pen}_{\lambda, \alpha} \left( g(\bm{X}_i, \tau), g(\bm{X}_i, \tau^\prime ) \right) \} \right],$ for a class of neural networks $\mathcal{H}$, where $\alpha > 0$ is fixed and $\tau, \tau' \overset{iid}{\sim} \textrm{Uniform}(0,1)$. Consider a class of generator functions $\mathcal{G}$ where $\{ G \in \mathcal{G} : \mathbb{R}^p\times\mathbb{R}\times \mathbb{R
}\mapsto \mathbb{R}\}$.
Suppose that for all $\bm{X} \in\mathcal{X}\subset \mathbb{R}^p$, quantile levels $\tau \in (0,1)$, and tuning parameters $\lambda\in \Lambda$, there exists $G \in\mathcal{G}$ such that the target neural networks $\widehat g_{\tau,\lambda}(\bm{X})$  can be represented by some $G(\bm{X},\tau,\lambda)$. Then, for $i=1,\dots,n$,
$$
\widehat g_{\tau,\lambda}(\bm{X}_i)=\widehat G(\bm{X}_i,\tau,\lambda) \: \text{ a.s.},
$$
with respect to the probability law related to  $\mathbb{E}_{\tau, \lambda}$, where $\widehat{G}$ is a solution to \eqref{eq:PGQR2}.
\end{corollary}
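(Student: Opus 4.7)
The plan is to reduce Corollary \ref{corollary:existence} to Proposition \ref{prop:PGQRexistence} by freezing $\lambda$ and then integrating the resulting identity against $p(\lambda)$. By linearity of expectation (and Fubini for the double expectation over $(\tau,\tau')$), the objective in \eqref{eq:PGQR2} can be rewritten as
\[
\mathbb{E}_{\lambda}\!\left[\frac{1}{n}\sum_{i=1}^n \mathbb{E}_{\tau}\{\rho_\tau(Y_i - G(\bm{X}_i,\tau,\lambda))\} + \mathbb{E}_{\tau,\tau'}\{\text{pen}_{\lambda,\alpha}(G(\bm{X}_i,\tau,\lambda),G(\bm{X}_i,\tau',\lambda))\}\right],
\]
so that for each fixed $\lambda \in \Lambda$, the bracketed quantity is exactly the PGQR loss \eqref{eq:PGQR} applied to the $\lambda$-slice $G(\cdot,\cdot,\lambda):\mathbb{R}^p \times (0,1)\to \mathbb{R}$.

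Next, I would apply Proposition \ref{prop:PGQRexistence} slice-by-slice. For each $\lambda \in \Lambda$, the proposition yields that any minimizer of the inner bracket must satisfy $G(\bm{X}_i,\tau,\lambda) = \widehat{g}_{\tau,\lambda}(\bm{X}_i)$ almost surely in $\tau$. The representation hypothesis in the corollary---that for every $(\bm{X},\tau,\lambda) \in \mathcal{X}\times(0,1)\times \Lambda$ there exists some $G \in \mathcal{G}$ whose value at $(\bm{X},\tau,\lambda)$ equals $\widehat{g}_{\tau,\lambda}(\bm{X})$---guarantees that a single hyper-network $G^\star \in \mathcal{G}$ can attain the pointwise slicewise lower bound \emph{simultaneously} across all $\lambda \in \Lambda$. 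Because the outer expectation $\mathbb{E}_\lambda[\cdot]$ is then minimized by minimizing its integrand at each $\lambda$, any solution $\widehat G$ of \eqref{eq:PGQR2} must satisfy $\widehat G(\bm{X}_i,\tau,\lambda) = \widehat{g}_{\tau,\lambda}(\bm{X}_i)$ outside a set of $p(\tau)p(\lambda)$-measure zero.

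The main delicacy, as in Proposition \ref{prop:PGQRexistence}, is the handling of the ``almost surely'' qualifier when the pointwise minimizers $\widehat{g}_{\tau,\lambda}$ are not unique in functional form. The argument I would deploy is that only equality of loss \emph{values} is being asserted: if $\widehat G$ differed from every choice of $\widehat{g}_{\tau,\lambda}$ on a set of positive $p(\tau)p(\lambda)$-measure, the integrated loss would strictly exceed the slicewise lower bound, contradicting optimality of $\widehat G$. Because $p(\lambda)$ is a discrete uniform on the finite grid $\Lambda$, $\mathbb{E}_\lambda$ reduces to a finite average and there are no measure-theoretic subtleties across $\lambda$; the proof thus collapses to reapplying the $\tau$-almost-sure statement of Proposition \ref{prop:PGQRexistence} once for each element of $\Lambda$, which is why the authors flag the extension as ``straightforward.'' The only non-routine step is checking that the joint $\sigma$-algebra generated by $(\tau,\lambda)$ is compatible with Fubini in the penalty term, which follows immediately from the independence of $\tau,\tau',\lambda$ under their respective reference measures.
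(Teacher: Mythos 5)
Your proposal is correct and matches the route the authors intend: the paper omits the proof, stating only that it is "a straightforward extension of the proof of Proposition \ref{prop:PGQRexistence}," and your argument --- conditioning on $\lambda$, noting that each $\lambda$-slice of the objective in \eqref{eq:PGQR2} is exactly the loss in \eqref{eq:PGQR}, and then applying the exchange/contradiction argument of Proposition \ref{prop:PGQRexistence} once per atom of the finite grid $\Lambda$ --- is precisely that extension. The observation that discreteness of $p(\lambda)$ removes any measure-theoretic issues across $\lambda$ is the right justification for why the extension is routine.
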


We note that \cite{https://doi.org/10.48550/arxiv.2006.00767} proposed to use GMS for inference in \textit{linear} quantile regression. In contrast, PGQR is a method for \textit{nonparametric} joint quantile estimation and conditional density estimation (CDE). Linear quantile regression cannot be used for CDE; as a linear model, GMS linear quantile regression also does not suffer from vanishing variability. This is not the case for GQR, which motivates us to introduce the variability penalty in Section \ref{sec:pen}. Finally, we employ the GMS idea for \textit{tuning parameter} selection in PGQR, rather than for constructing pointwise confidence bands (as in \cite{https://doi.org/10.48550/arxiv.2006.00767}).

By including $\lambda \sim p(\lambda)$ in the generator, PGQR only needs to perform one \textit{single} optimization in order to estimate the network $G$. We can then \textit{use} the estimated network $\widehat{G}$ to tune $\lambda$, as we detail in the next section. This single-model training stands in contrast to traditional smoothing methods for nonparametric quantile regression, which typically require \textit{repeated} model evaluations via (generalized) cross-validation to tune hyperparameters such as bandwidth or roughness penalty.

\subsection{Selecting An Optimal Regularization Parameter} \label{sec:select}

We now introduce our method for selecting the regularization parameter $\lambda$ in our variability penalty \eqref{eq:pen}. Considering the candidate set $\Lambda$, the basic idea is to select the best $\lambda^\star \in \Lambda$ that minimizes the distance between the generated conditional distribution and the true conditional distribution. 
Denote the trained network under the modified PGQR objective \eqref{eq:PGQR2} as $\widehat{G}(\cdot,\tau,\lambda)$, and let $(\bm{X}_{\text{val}}, Y_{\text{val}})$ denote a validation sample. After optimizing \eqref{eq:PGQR2}, it is effortless to generate the conditional quantile functions for each $\bm{X}_{\text{val}}$ in the validation set, each $\lambda \in \Lambda$, and any quantile level $\xi \in (0,1)$. In order to select the optimal $\lambda^\star \in \Lambda$, we first introduce the following proposition.

\begin{proposition}\label{prop:lambda}
Suppose that two univariate random variables $Q$ and $W$ are independent. Then, $Q$ and $W$ have the same distribution if and only if $P(Q<W \mid W)$ follows a standard uniform distribution.  
\end{proposition}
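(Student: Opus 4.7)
My plan is to reduce the claim to the probability integral transform (PIT), handling both directions by working with the single random variable $U := P(Q < W \mid W)$. Writing $F_Q$ for the CDF of $Q$ and using independence of $Q$ and $W$, the first step is the identity
\[
P(Q < W \mid W = w) = P(Q < w) = F_Q(w^-),
\]
so $U = F_Q(W^-)$. Under the (implicit) assumption that the distributions involved are continuous this simplifies to $U = F_Q(W)$. I would note in passing that this continuity is actually forced in the reverse direction, since an atom of $F_Q$ would leave a gap in the range of $F_Q(W^-)$ that precludes uniformity.

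For the forward direction, the plan is to invoke PIT directly: if $Q \stackrel{d}{=} W$ then $F_Q \equiv F_W$, and hence $U = F_W(W) \sim \textrm{Uniform}(0,1)$ by the classical PIT for continuous CDFs.

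For the reverse direction, the plan is to show that the uniformity of $F_Q(W)$ pins down the distribution of $W$ to equal that of $Q$. The cleanest route is via the generalized quantile function $F_Q^{-1}(u) := \inf\{x : F_Q(x)\ge u\}$, which satisfies $F_Q^{-1}(U) \stackrel{d}{=} Q$ whenever $U \sim \textrm{Uniform}(0,1)$; combining this with $W = F_Q^{-1}(F_Q(W))$ a.s.\ yields $W \stackrel{d}{=} Q$. A direct CDF-based argument works equally well: monotonicity of $F_Q$ gives $\{W\le t\}\subseteq \{F_Q(W)\le F_Q(t)\}$, with equality whenever $F_Q$ is strictly increasing at $t$, so that
\[
P(W\le t) \;=\; P(F_Q(W)\le F_Q(t)) \;=\; F_Q(t) \;=\; P(Q\le t).
\]

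The main obstacle I anticipate is the careful handling of strict versus non-strict inequalities and of possible flat regions of $F_Q$. Concretely, a preliminary lemma is needed in the reverse direction to argue that the support of $W$ must be contained in the support of $Q$ and that $F_Q$ is strictly increasing there — otherwise $F_Q(W)$ would either have atoms (violating uniformity) or fail to reach parts of $[0,1]$. Once this regularity step is discharged, the PIT invocation itself is essentially a one-liner, so the bulk of the rigor lives in these measure-theoretic support arguments rather than in the probabilistic core of the statement.
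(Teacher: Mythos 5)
Your proposal is correct and rests on the same probability-integral-transform identity $P(Q<W\mid W)=F_Q(W^-)$ that drives the paper's proof, but you execute the reverse direction with noticeably more care. The paper disposes of irregular cases by declaring ``WLOG, we assume that $F_Q$ is invertible'' --- which is not actually without loss of generality --- and then inverts $F_Q$ in the chain $x=P(F_Q(W)<x)=P(W<F_Q^{-1}(x))=F_Q(F_Q^{-1}(x))$ (its concluding line, asserting $F_W(F_W^{-1}(x))=x$, appears to be a typo for a statement about $F_W\circ F_Q^{-1}$). You instead discharge the regularity issues explicitly: an atom of $F_Q$ would leave a gap in the range of $F_Q(W^-)$, and a flat stretch of $F_Q$ carrying $W$-mass would put an atom in $F_Q(W)$, so uniformity itself forces $P\bigl(W\in F_Q^{-1}(\{c\})\bigr)=0$ for every level set; after that, $P(W\le t)=P(F_Q(W)\le F_Q(t))=F_Q(t)$ goes through without ever inverting $F_Q$. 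This buys a complete argument where the paper's is a sketch. One caveat applies to both: the forward implication is false without a continuity assumption on the common distribution (take $Q,W$ i.i.d.\ Bernoulli, where $F_Q(W^-)$ takes only the values $0$ and $1/2$), so the ``implicit'' continuity you mention is not a convenience but a hypothesis the proposition needs and omits. You are right to surface it, and it deserves to be stated as an explicit assumption rather than in passing.
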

% \begin{proof}
%	See Appendix \ref{sec:proofs}.
% \end{proof}

Based on Proposition \ref{prop:lambda}, we know that, given a validation sample $(\bm{X}_{\text{val}}, Y_{\text{val}})$, the best $\lambda^\star \in \Lambda$ should satisfy $P_{\tau, \lambda} ( \widehat{G}(\bm{X}_{\text{val}}, \tau, \lambda^\star ) < Y_{\text{val}} \hspace{.1cm} \big\lvert \hspace{.1cm} Y_{\text{val}}) \sim \text{Uniform}(0,1).$
In practice, if we have $M$ random quantile levels $\bm{\tau} = (\tau_1, \ldots, \tau_M) \in (0,1)^{M}$, we can estimate the probability $P_{\tau, \lambda} := P_{\tau, \lambda} ( \widehat{G}(\bm{X}_{\text{val}}, \tau, \lambda^\star ) < Y_{\text{val}} \hspace{.1cm} \big\lvert \hspace{.1cm} Y_{\text{val}})$ with a Monte Carlo approximation,
\begin{equation} \label{eq:phattau}
\widehat{P}_{\tau, \lambda}^{(i)} = M^{-1} \sum_{k=1}^{M} \mathbb{I} \{ \widehat{G} (\bm{X}_{\text{val}}^{(i)},\hspace{.1cm} \tau_k, \lambda ) < Y_{\text{val}}^{(i)} \}\approx P_{\tau, \lambda} ( \widehat{G}(\bm{X}_{\text{val}}, \tau, \lambda^\star ) < Y_{\text{val}}\big\lvert \hspace{.1cm} Y_{\text{val}}),
\end{equation}
for each $i$th validation sample $(\bm{X}_{\text{val}}^{(i)}, Y_{\text{val}}^{(i)})$. With $n_{\text{val}}$ validation samples, we proceed to estimate $\widehat{P}_{\tau, \lambda}^{(i)}$ for all $n_{\text{val}}$ validation samples $(\bm{X}_{\text{val}}^{(i)}, Y_{\text{val}}^{(i)})$'s and each $\lambda \in \Lambda$. We then compare the empirical distribution of the $\widehat{P}_{\tau, \lambda}^{(i)}$'s to a standard uniform distribution for each $\lambda \in \Lambda$ and select the $\lambda^\star$ that \textit{minimizes} the distance between $\widehat{P}_{\tau, \lambda}$ and $\text{Uniform}(0,1)$. That is, given a valid distance measure $d$, we select the $\lambda^{\star}$ which satisfies 
\begin{equation} \label{eq:optlambda}
\lambda^{\star} =  \argmin_{\lambda \in \Lambda} d ( \widehat{P}_{\tau, \lambda}, \textrm{Uniform}(0,1) ).
\end{equation}
% Based on this fact, $\mathbf{u}$ in PGQR framework can be evaluated by Monte Carlo simulatipon on out-of-sample validation data. With the best $\lambda^* \in \Lambda$, we shall observe that 
% \begin{equation}\label{eq:ui}
% \mathbf{\widehat{u}^*}: \{ \widehat{u_i}^*= P_{\tau}(\widehat{G}(X_{i},\tau,\lambda^*)<y_i) \} \text{ for } i \in \{ \mathbf{y}_{\text{test}}, \mathbf{X}_{\text{test}} \},
% \end{equation} 
% where the probability is approximated by $b^{-1}\sum_{i=1}^b \big( \widehat{G}(X_i,\tau_b,\lambda^*) < y_i \big)$ for some large simulation size $b$, follows uniform distribution. 

\begin{figure}[t]
\centering
\includegraphics[width=0.45\textwidth]{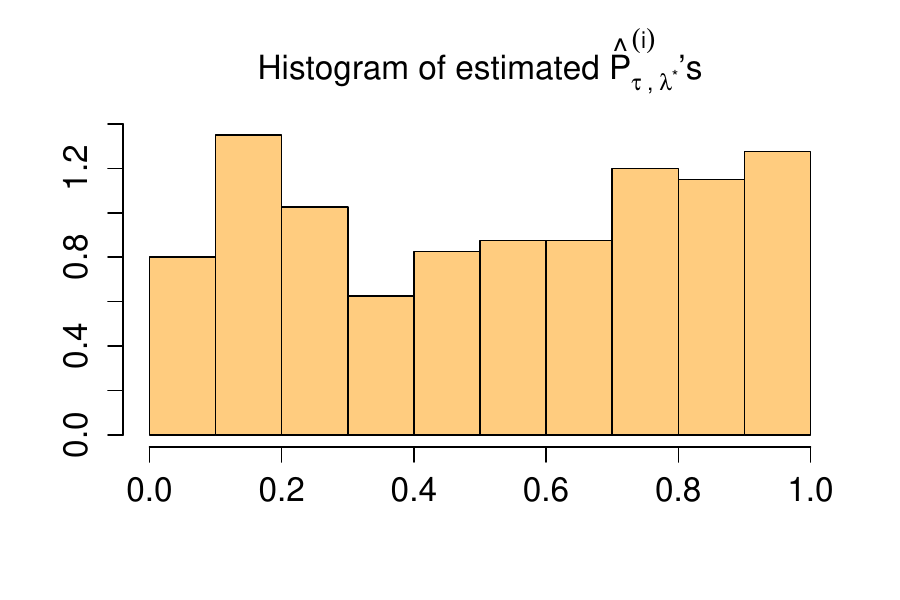}
\caption{\small Histogram of the estimated $\widehat{P}_{\tau, \lambda^{\star}}^{(i)}$'s in the validation set for the optimal $\lambda^{\star}$.}
\label{fig:simu-u}
\end{figure}

\noindent There are many possible choices for $d$ in \eqref{eq:optlambda}. In this paper, we use the Cramer?von Mises (CvM) criterion due to the simplicity of its computation and its good empirical performance. The CvM criterion is straightforward to compute as
\begin{align} \label{eq:CvM}
d ( \widehat{P}_{\tau, \lambda}, \textrm{Uniform}(0,1) ) =  \sum_{i=1}^{n_\text{val}} ( i/n_\text{val}-\widehat{P}_{\tau, \lambda}^{(i)} - 2/{n_\text{val}})^2/n_\text{val}.
\end{align}
After selecting $\lambda^*$ according to \eqref{eq:optlambda}, we can easily generate samples from the conditional quantiles of $p( Y \mid \bm{X})$ for any feature data vector $\bm{X}$. We simply generate $\widehat{G}(\bm{X}, \xi_k, \lambda^{\star})$, where $k = 1, \ldots, b$, for random quantiles $\xi_1, \ldots, \xi_b \overset{iid}{\sim} \textrm{Uniform}(0,1)$. For sufficiently large $b$, the conditional density $p(Y \mid \bm{X})$ can be inferred from $\{ \widehat{G}(\bm{X}, \xi_k, \lambda^{\star}) \}_{k=1}^{b}$. The complete algorithm for scalable computation of PGQR is given in Algorithm \ref{alg:PGQR}. 
% As a result, the best $\lambda^*$ is selected by using CvM as following
% \begin{equation}\label{eq:selectlambda}

We now illustrate our proposed selection method for $\lambda \in \Lambda$ in the simulated linear regression example from Section \ref{sec:pen}. Figure \ref{fig:simu-u} plots the histogram of the estimated $\widehat{P}_{\tau, \lambda^{\star}}^{(i)}$'s in the validation set for the $\lambda^{\star}$ which minimizes the CvM criterion. We see that the empirical distribution of the $\widehat{P}_{\tau, \lambda^{\star}}^{(i)}$'s closely follows a standard uniform distribution. 

Figure \ref{fig:cvm} illustrates how the PGQR solution changes as $\log(\lambda)$ increases for every $\lambda \in \Lambda$. Figure \ref{fig:cvm} plots the conditional standard deviation (left panel), the CvM criterion (middle panel), and the coverage rate of the 95\% confidence intervals (right panel) for the PGQR samples in the validation set. The vertical dashed red line denotes the optimal $\lambda^{\star} = 0.01$. We see that the $\lambda^{\star}$ which minimizes the CvM (middle panel) captures the true conditional standard deviation of $\sigma = 1$ (left panel) and attains coverage probability close to the nominal rate (right panel). This example demonstrates that our proposed selection method provides a practical alternative to computationally burdensome cross-validation for tuning $\lambda$ in the variability penalty \eqref{eq:pen}. In Appendix \ref{sec:sims}, we further demonstrate that our tuning parameter selection method is suitable to use even when the true conditional variance is very small (true $\sigma^2=0.01$). In this scenario, our procedure selects a very small $\lambda^{\star} \approx 0$, so that PGQR does \textit{not} overestimate the conditional variance.

\begin{figure}[t]
	\centering
	\includegraphics[width=1.0\textwidth]{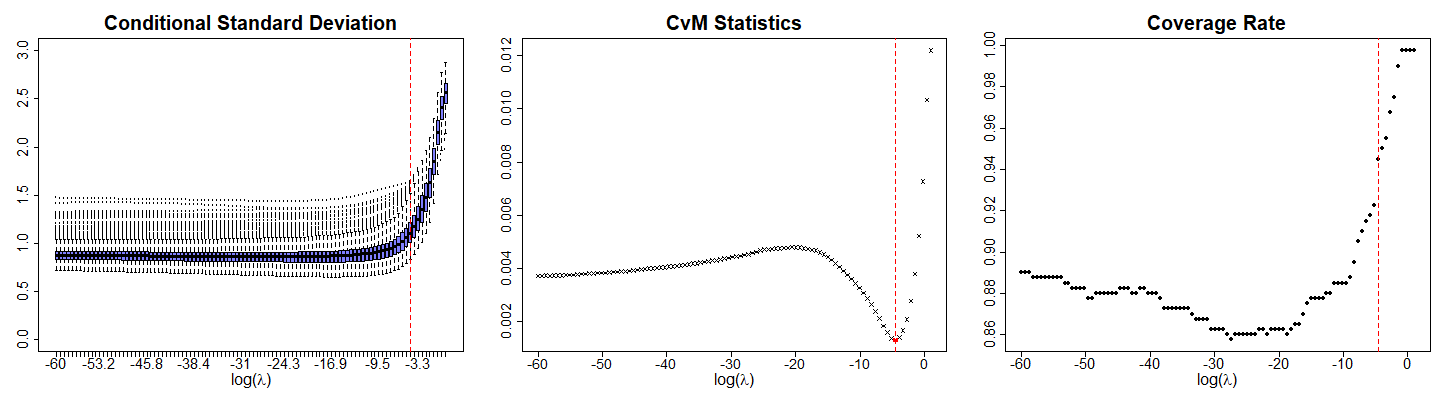}
	\caption{\small The conditional standard deviation (left panel), CvM statistic (middle panel), and coverage rate (right panel) in the validation set vs. $\log(\lambda)$ for each $\lambda \in \Lambda$. The red dashed line corresponds to the $\lambda^\star$ which minimizes the CvM criterion.}
	\label{fig:cvm}
\end{figure}

% A histogram plot \ref{fig:simu-u} in Appendix \ref{app:histu} of approximated $\mathbf{\widehat{u}^*}$ when close to uniform distribution, indicates the conditional distribution is approximated well,  Detailed summary of PGQR algorithm is present at Algorithm \ref{alg:PGQR}. 

\begin{algorithm}[t]
	\footnotesize
	\caption{\footnotesize Scalable Implementation of PGQR}\label{alg:PGQR}
	\begin{algorithmic}[1]
		\STATE \emph{Split} \textit{non}-test data into non-overlapping training set $(\bm{X}_{\text{train}}, \bm{Y}_{\text{train}})$ and validation set $(\bm{X}_{\text{val}}, \bm{Y}_{\text{val}})$
		
		\vspace{-.02cm}
		\STATE \emph{Initialize} $G$ parameters $\phi$, learning rate $\gamma$, width $h$ of DNN hidden layers, and total epoches $T$
		\STATE \textbf{procedure:} Optimizing $G$ 
		\begin{ALC@g}
			\FOR{epoch $t$ in $1,\ldots, T$}
			\STATE Sample $\tau,\tau^\prime \sim \textrm{Uniform}(0,1)$ and $\lambda \in \Lambda$
			\STATE Evaluate loss \eqref{eq:PGQR2} with $(\bm{X}_{\text{train}}, \bm{Y}_{\text{train}})$
			\STATE Update $G$ parameters $\phi$ via SGD
			\ENDFOR
			\STATE \textbf{return} $\widehat{G}$
		\end{ALC@g}
		\STATE \textbf{end procedure}
		
		\STATE \textbf{procedure:} Tuning $\lambda$ 
		\begin{ALC@g}
			\STATE Set $M = 1000$ and sample $\tau_1, \ldots, \tau_M \overset{iid}{\sim} \textrm{Uniform}(0,1)$
			\STATE Generate $\{ \widehat{G}(\bm{X}_{\text{val}}^{(i)}, \tau_1, \lambda_l ), \ldots, \widehat{G}(\bm{X}_{\text{val}}^{(i)}, \tau_M, \lambda_l ) \}$ for each $\lambda_l \in \Lambda$ and each $\bm{X}_{\text{val}}^{(i)}$, $i = 1, \ldots, n_{\text{val}}$ 
			\STATE Compute $\widehat{P}_{\tau, \lambda_l}^{(i)}$ as in \eqref{eq:phattau} on $(\bm{X}_{\text{val}}^{(i)}, Y_{\text{val}}^{(i)})$ for each $i = 1, \ldots, n_{\text{val}}$ and each $\lambda_l \in \Lambda$
			\STATE Select $\lambda^*$ according to \eqref{eq:optlambda} with CvM criterion \eqref{eq:CvM} as $d$
			\STATE \textbf{return} $\lambda^*$
		\end{ALC@g}
		\STATE \textbf{end procedure}
		
		\STATE \textbf{procedure:} Estimating $p(Y \mid \bm{X}_{\text{test}})$ for test data $\bm{X}_{\text{test}}$
		\begin{ALC@g}
			\STATE Set $b=1000$ and sample $\xi_1, \ldots, \xi_b \overset{iid}{\sim} \text{Uniform}(0,1)$
			\STATE Estimate $\xi_k$-th quantile of $Y \mid \bm{X}_{\textrm{test}}$ as $\widehat{G}(\bm{X}_{\text{test}},\xi_k,\lambda^*)$ for $k = 1, \ldots, b$
			\STATE \textbf{return} $\{\widehat{G}(\bm{X}_{\text{test}},\xi_1,\lambda^*), \ldots, \widehat{G}(\bm{X}_{\text{test}},\xi_b,\lambda^*)\}$
		\end{ALC@g}
		\STATE \textbf{end procedure}
	\end{algorithmic}
\end{algorithm}

% \textcolor{blue}{RB: Maybe we should focus only on the CR statistic so we can make \eqref{eq:selectlambda} more concrete. We can briefly explain why we chose the CR statistic. Also, please define the CR statistic as an equation, and then reference that equation in Algorithm 1. Please also present a figure with two panels: left panel is the conditional s.d. plotted against log lambda, and the right panel is the plot of the CR statistic against log lambda (it should look a bit like the CVE curve for lasso). }

% \newpage
% \section{Asymptotic Theory}

%---------------------------------
% \newpage
\section{Numerical Experiments and Real Data Analysis}\label{sec:sim}

	We evaluated the performance of PGQR on several simulated and real datasets. We fixed $\alpha=1$ or $\alpha = 5$ in the variability penalty \eqref{eq:pen}. We optimized the modified PGQR loss \eqref{eq:PGQR2} over the PMNN family (Section \ref{sec:mono}), where each sub-network had three hidden layers with 1000 neurons per layer and ReLU \citep{nair2010rectified} was used as the activation function. The support for $\Lambda$ was chosen to be 100 equispaced values between $0$ and $\exp(1)$. \texttt{PyTorch} was used to implement PGQR. To ensure numerical stability, Algorithm \ref{alg:PGQR} was initialized with random weights close to but not exactly zero. We estimated $\widehat{G}$ in \eqref{eq:PGQR2} using the Adam optimizer \citep{kingma2014adam}, which has been empirically shown to be robust and helps to mitigate gradient explosion. Gradient clipping \citep{pmlr-v28-pascanu13} could also be employed to prevent exploding gradients; however, we found that gradient clipping was not needed for PGQR. Throughout our simulations and real data applications, we did not experience any numerical overflow, underflow, or exploding gradients.

 We compared PGQR to several other state-of-the-art methods:
\begin{itemize}[leftmargin=.2in]
	\setlength\itemsep{0.2em}    
	\item \noindent{\bf GCDS} \citep{zhou2022deep}. Following \cite{zhou2022deep}, we trained both a generator and a discriminator using FNNs with one hidden layer of 25 neurons. Despite this simple architecture, we found that GCDS might still encounter vanishing variability. For fair comparison to PGQR, we also increased the number of hidden layers to three and the number of nodes per layer to 1000. We refer to this modification as \textbf{deep-GCDS}.
	\item \noindent{\bf WGCS} \citep{liu2021wasserstein}. We adopted the gradient penalty recommended by \cite{liu2021wasserstein} and set the hyperparameter associated with the gradient penalty as 0.1.
	
	\item \noindent{\bf FlexCoDE}  \citep{izbicki2017converting}. We considered three ways of estimating the basis functions $\beta_j(\bm{X})$: nearest-neighbor regression (NNR), sparse additive model (SAM), and XGBoost (FlexZboost). 
	\item \noindent{\bf Random Forest CDE}, or RFCDE \citep{https://doi.org/10.48550/arxiv.1906.07177}. 
\end{itemize}
For FlexCoDE and RFCDE, we adopted the default hyperparameter settings of \cite{izbicki2017converting} and \cite{https://doi.org/10.48550/arxiv.1906.07177} respectively.

\begin{figure}[t]
	\centering
	\includegraphics[width=1.0\textwidth]{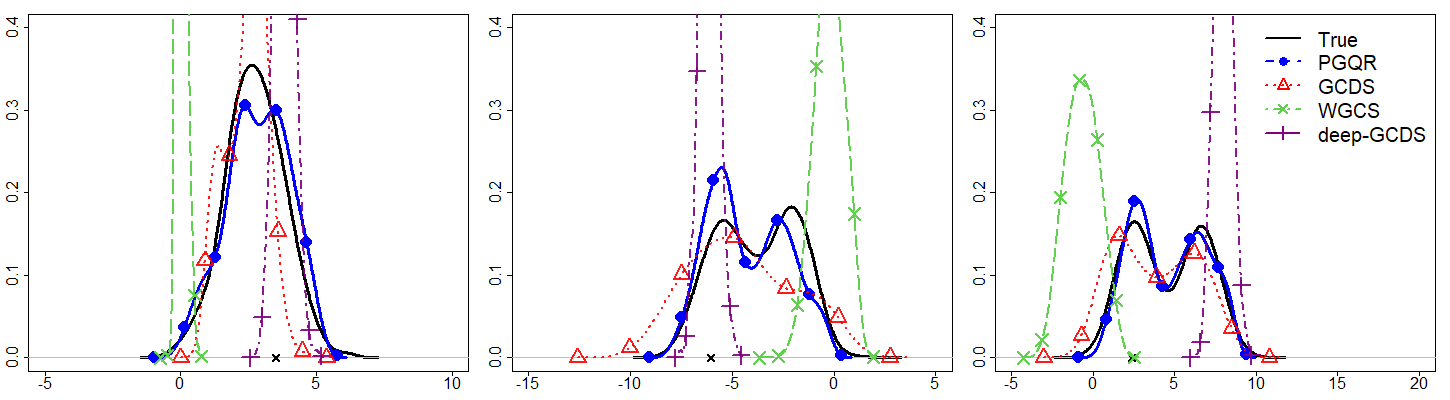}
	\caption{\small Plots of the estimated conditional densities $p(Y \mid \bm{X}_{\text{test}})$ for three different test observations from one replication of Simulation 3. Plotted are the estimated conditional densities for PGQR ($\alpha=1$), GCDS, WGCS, and deep-GCDS.}
	\label{fig:simu}
\end{figure}

\subsection{Simulation Studies}\label{subsec:construction}

% By simulation, the $\mathcal{P}(\mathbf{y}\vert\mathbf{X})$ is specially designed in a way that it has certain complex shape such as multi-modality or has mixture of heavy tail and skewness. We consider a sample size of  $n = 2000$ and simulate independent data $\mathbf{X} \sim \text{N}(0, I_p)$. The true generation procedure of $\mathcal{P}(\mathbf{y}\vert\mathbf{X})$ can be well summarized as $\mathbf{y}= g(\boldsymbol{\beta},\mathbf{X}) +\epsilon$ where $\boldsymbol{\beta} \in \mathbb{R}^p$ is equally-spaced in range of $[-2,2]$. More details of different choice of function $g(\cdot,\cdot)$ and shape of noise $\epsilon$ is discussed below.

For our simulation studies, we generated data from $Y_i = g(\bm{X}_i) + \epsilon_i, i = 1, \ldots, 2000$, for some function $g$, where $\bm{X}_i \overset{iid}{\sim} \mathcal{N} (\bm{0}, \bm{I}_p)$ and the residual errors $\epsilon_i$'s were independent. The specific simulation settings are described below.
\begin{enumerate}[leftmargin=.2in]
	\setlength\itemsep{0.2em}
	
	\item \noindent{\bf Simulation 1: Multimodal and heteroscedastic.} $Y_i = \beta_i X_i + \epsilon_i$, where $\beta_i = \{-1,0,1\}$ with equal probability and $\epsilon_i = (0.25 \cdot \vert X_i \vert)^{1/2}$.
	
	\item \noindent{\bf Simulation 2: Mixture of left-skewed and right-skewed.} $Y_i = \bm{X}_i^{\top} \boldsymbol{\beta} + \epsilon_i$, where $\boldsymbol{\beta} \in \mathbb{R}^5$ is equispaced between $[-2, 2]$, and $\epsilon_i = \chi^2(1,1)\mathcal{I}(X_1>=0.5)+\text{log}[\chi^2(1,1)]\mathcal{I}(X_1<0.5)$. Here, the skewness is controlled by the covariate $X_1$.
	
	\item \noindent{\bf Simulation 3: Mixture of unimodal and bimodal.} $Y_i =\bm{X}_i^{\top}\boldsymbol{\beta}_i + \epsilon_i$, where $\bm{\beta}_i \in \mathbb{R}^5$ with $\beta_{i1} \in \{ -2, 2 \}$ with equal probability, $(\beta_{i2}, \beta_{i3}, \beta_{i4}, \beta_{i5})^\top$ are equispaced between $[-2, 2]$, and $\epsilon_i \overset{iid}{\sim} \mathcal{N}(0,1)$. Here, $p(Y \mid \bm{X})$ is unimodal when $X_1 \approx 0$, and otherwise, it is bimodal.
\end{enumerate}
In our simulations, 80\% of the data was used to train the model, 10\% was used as the validation set for tuning parameter selection, and the remaining 10\% was used as test data to evaluate model performance. In Appendix \ref{sec:sims}, we present additional simulation results for the following scenarios: $g(\bm{X}_i)$ is a nonlinear function of $\bm{X}_i$ (Simulation 4), the error variance is very small (Simulation 5), and the error variance is dependent on $\Vert \bm{X} \Vert_1$ (Simulation 6).

% We consider four simulation settings discussed above and implement PGQR according to algorihtm \ref{alg:PGQR}. The best $\lambda^* \in [0, 1.5]$ is selected on test dataset and $\alpha$ is set to be 1 during training stage.
Figure \ref{fig:simu} compares the estimated conditional density of $p(Y \mid \bm{X}_{\text{test}})$ for three test observations from one replication of Simulation 3. We see that PGQR (solid blue line with filled circles) is able to capture both the unimodality \textit{and} the bimodality of the ground truth conditional densities (solid black line). Meanwhile, GCDS (dashed red line with hollow triangles), WGCS (dashed green line with crosses), and deep-GCDS (dashed purple line with pluses) struggled to capture the true conditional densities for at least some test points. In particular, Figure \ref{fig:simu} shows some evidence of variance \textit{underestimation} for WGCS and deep-GCDS, whereas this is counteracted by the variability penalty in PGQR. Additional figures from our simulation studies are provided in Appendix \ref{sec:sims}.

\begin{table}[t]
	\centering
	\resizebox{1.0\columnwidth}{!}{
		\begin{tabular}{c|ccc|ccc|ccc}
			\hline
			&\multicolumn{3}{c|}{Simulation 1}&\multicolumn{3}{c|}{Simulation 2}&\multicolumn{3}{c}{Simulation 3}\\
			\hline
			Method&$\mathbb{E}(Y \mid \bm{X})$&$\text{sd}(Y \mid \bm{X})$&Cov (Width)&$\mathbb{E}(Y \mid \bm{X})$&$\text{sd}(Y \mid \bm{X})$&Cov (Width)&$\mathbb{E}(Y \mid \bm{X})$&$\text{sd}(Y \mid \bm{X})$&Cov (Width)\\
			\hline
			\hline
			PGQR ($\alpha$=1)
			&0.41&0.34&0.95 (23.48)&0.38&0.11&0.93 (8.14)&0.30&0.08&0.92 (6.61)
			\vspace{0.05cm}\\
			PGQR ($\alpha$=5)
			&\textbf{0.36}&\textbf{0.31}&0.95 (23.41)&\textbf{0.31}&\textbf{0.07}&\textbf{0.95 (8.83)} &\textbf{0.25}&\textbf{0.06}&\textbf{0.96 (6.60)}
			\vspace{0.05cm}\\
			GCDS
			&10.49&25.82&0.68 (15.48)&0.53&0.27&0.92 (8.55)&0.33&0.12&0.84 (5.78)
			\vspace{0.05cm}\\
			WGCS
			&229.91&73.68&0.15 (4.88)&6.57&1.45&0.80 (9.17)&6.25&1.98&0.71 (8.08)
			\vspace{0.05cm}\\
			deep-GCDS
			&7.99&56.87&0.38 (7.42)&5.41&2.89&0.42 (2.12)&6.11&2.78&0.28 (2.04)
			\vspace{0.05cm}\\
			\hline
			\hline
			FlexCoDE-NNR
			&0.97&0.54&0.96 (23.94)&0.83&0.36&0.91 (9.03)&1.12&0.75&0.92 (9.11)
			\vspace{0.05cm}\\
			FlexCoDE-SAM
			&0.37&0.62&0.97 (25.07)&0.73&1.03&0.93 (11.15)&1.01&1.99&0.93 (10.91)
			\vspace{0.05cm}\\
			FlexZBoost
			&0.77&63.81&\textbf{1.00 (46.11)}&1.29&0.36&0.91 (8.28)&1.77&0.74&0.85 (7.88)
			\vspace{0.05cm}\\
			RFCDE
			&1.98&0.72&0.44 (25.46)&0.61&0.34&0.56 (6.26)&0.83&0.65&0.96 (23.06)
			\\
			\hline
	\end{tabular}}
	\vspace{-0.2cm}
	\caption{\small Table reporting the PMSE for the conditional expectation and standard deviation, as well as the coverage rate (Cov) and average width of the 95\% prediction intervals, for Simulations 1 through 3. Results were averaged across 20 replicates.}
	\label{tab:simulation-main}
\end{table}

We repeated our simulations for 20 replications. For each experiment, we computed the predicted mean squared error (PMSE) for different summaries of the conditional densities for the test data. We define the PMSE as
\begin{equation}\label{eq:PMSE}
	\text{PMSE}= \frac{1}{n_{\text{test}}} \sum_{i=1}^{n_{\text{test}}}\left( \widehat{m} (\bm{X}_{\text{test},i})- m(\bm{X}_{\text{test},i}) \right)^2,
\end{equation}
where $m(\bm{X})$ generically refers to the conditional mean of $Y$ given $\bm X$, i.e. $\mathbb{E}( Y \mid \bm{X})$, or the conditional standard deviation, i.e. $\text{sd}( Y \mid \bm{X})$. For the generative models (PGQR, GCDS, WGCS, and deep-GCDS), we approximated $\widehat{m}$ in \eqref{eq:PMSE} by Monte Carlo simulation using $1000$ generated samples, while for FlexCoDE and RFCDE, we approximated $\widehat{m}$ using numerical integration. In addition to PMSE, we also used the 2.5th and 97.5th percentiles of the predicted conditional densities to construct 95\% prediction intervals for the test data. We then calculated the coverage probability (Cov) and the average width for these prediction intervals. 

%\begin{figure}[t]
%	\centering
	%\includegraphics[width=\textwidth]{}
	%\caption{\small Plot of the average PMSE across 20 replicates for the $\tau$-th quantiles $Q_{Y \mid \bm{X}} (\tau)$, $\tau \in \{ 0.1, 0.2, 0.3, 0.4, 0.5, 0.6, 0.7, 0.8, 0.9\}$ for Simulations 1 through 3.}
	%\label{fig:simu-quantile-main}
%\end{figure}

\begin{figure}[t]
	\centering
	\includegraphics[width=\textwidth]{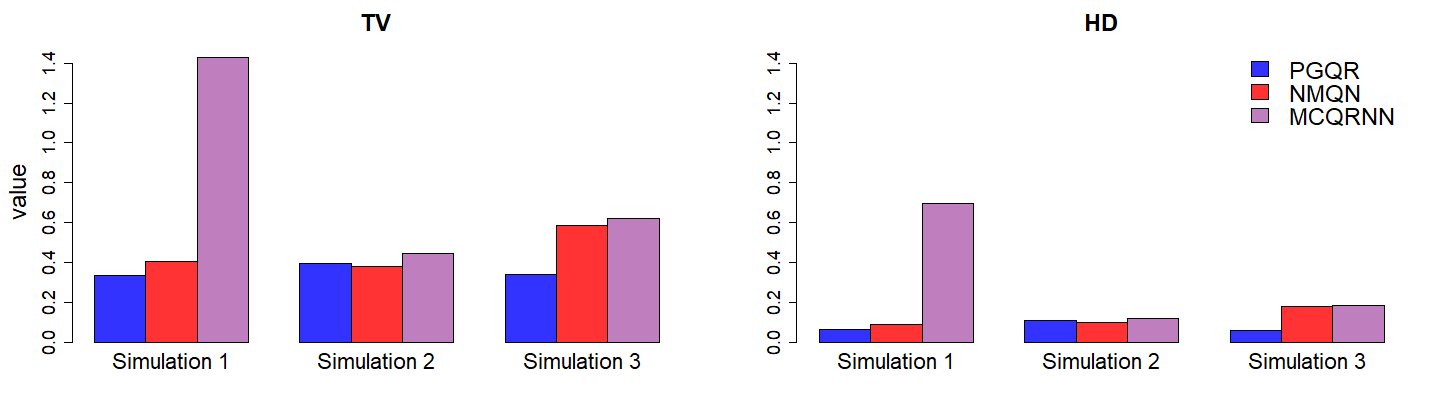}
	\caption{\small Barplots of the average total variation distance (TV) and Hellinger distance (HD) across 20 replicates evaluated at $\tau \in \{ 0.1, 0.2, 0.3, 0.4, 0.5, 0.6, 0.7, 0.8, 0.9\}$ for Simulations 1 through 3.}
	\label{fig:simu-quantile-main}
\end{figure}

Table \ref{tab:simulation-main} summarizes the results for PGQR with $\alpha \in \{ 1, 5 \}$ in \eqref{eq:pen} and all competing methods, averaged across 20 experiments. There was not much substantial difference between $\alpha = 1$ and $\alpha = 5$ for PGQR. Table \ref{tab:simulation-main} shows that PGQR had the lowest PMSE in all three simulations and attained coverage close to the nominal rate. In Simulations 2 and 3, the prediction intervals produced by PGQR had the highest coverage rate. In Simulation 1, FlexZBoost had 100\% coverage, but the average width of the prediction intervals for FlexZBoost was considerably larger than that of the other methods, suggesting that the intervals produced by FlexZBoost may be too conservative to be informative. 

Since PGQR approximates conditional quantile functions, we also compared PGQR with two other neural network approaches for nonparametric joint quantile regression. Specifically, we compared PGQR to the MCQRNN method of \cite{cannon2018non} and the $\ell_1$-penalization ($\ell_1$-p) method of \cite{moon2021learning}. These methods are available in the \textsf{R} packages \texttt{qrnn} and \texttt{l1pm} respectively.

To compare PGQR to $\ell_1$-p and MCQRNN, we considered the quantile accuracy criterion used by \cite{moon2021learning}.
Given $K$ prespecified quantile levels, let $Q_{\tau_k}(\cdot)$ be the conditional quantile function for $k=1, \ldots, K$, and let $\widehat{Q}_{\tau_k}$ denote the associated estimated quantile function.
Given $\bm{X}$, we obtained the quantile accuracy,	$\widehat{r}_k(\bm{X};\widehat{Q}_{\tau_k})= B^{-1}\sum_{b=1}^B\mathbb{I}(\bm{Y}_b^*\leq \widehat{Q}_{\tau_k}(\bm{X}))$, where $\bm{Y}^*_{1:B}$ were sampled from $P( Y \mid \bm{X})$ with $B = 2000$ and $K=9$ ($\tau_k \in \{0.1,0.2,0.3,0.4,0.5,0.6,0.7,0.8,0.9 \}$).
The relative frequency $\widehat{p}_k(\bm{X}, \widehat{Q}_{\tau_k})$ between $\tau_{k-1}$ and $\tau_k$ was then calculated as $\widehat{r}_k(\bm{X};\widehat{Q}_{\tau_{k}})-\widehat{r}_k(\bm{X};\widehat{Q}_{\tau_{k-1}})$.
If $\widehat{Q}_{\tau_k}$ is identical to true conditional quantile function, then $\widehat{p}_k(\bm{X}, \widehat{Q}_{\tau_k}) \overset{p}{\rightarrow} 1/(K+1) \text{ as }B \rightarrow \infty$.
Based on this fact, we used total variation distance (TV) and Hellinger distance (HD) to measure the distance between $\{ \widehat{p}_{k}(\bm{X}, \widehat{Q}_{\tau_k}) \}_{k=1,\ldots,K}$ and $\{1/(K+1),\ldots, 1/(K+1) \}$.
A more detailed description can be found in \cite{moon2021learning}.

In Figure \ref{fig:simu-quantile-main}, we plot the performance of PGQR, $\ell_1$-p, and MCQRNN for Simulations 1 through 3. Additional simulation results are provided in Appendix \ref{sec:sims}. Overall, PGQR had comparable performance to $\ell_1$-p, and both PGQR and $\ell_1$-p performed better than MCQRNN with lower average TV and HD. We reiterate that a major difference between these methods is that PGQR uses a deep \emph{generative} model to \emph{generate} the conditional quantiles from many random quantile levels $\tau$'s, whereas $\ell_1$-p and MCQRNN require the practitioner to specify $K$ target quantile levels $\tau_k, k = 1, \ldots, K$, at which to estimate the conditional quantiles.

%We also computed the PMSE for the $\tau$-th quantile.
%(i.e. $\mathbb{M}[ Y \mid \bm{X}_{\text{test},i}] = Q_{Y \mid \bm{X}_{\text{test},i}}(\tau)$ in \eqref{eq:PMSE}) for $\tau \in \{0.1, 0.2, 0.3, 0.4, 0.5, 0.6, 0.7, 0.8, 0.9 \}$. 
%Figure \ref{fig:simu-quantile-main} plots the PMSE for these nine quantiles for PGQR ($\alpha=1$), GCDS, deep-GCDS, and WGCS, averaged across 20 experiments. We see that PGQR (blue line with crosses) had the lowest PMSE for most of the quantile levels. PGQR also had uniformly lower PMSE in our quantile set than WGCS (green line with pluses) and deep-GCDS (red line with triangles), suggesting that WGCS and deep-GCDS may have been more prone to vanishing variability. 

% Therefore, the leftest panel of Figure \ref{fig:simu-quantile} shows PGQR'S advantage over other models. Overall, PGQR has the lowest quantile PMSE in contrast to other competing models.

% As reflected in graphical results from Appendix \ref{sec:sim more}, 

\subsection{Real Data Analysis}

We examined the performance of PGQR on three real datasets from the UCI Machine Learning Repository, which we denote as: \texttt{machine}, \texttt{fish}, and \texttt{noise}.\footnote{Accessed from \url{https://archive.ics.uci.edu/ml/index.php}.}
We also applied PGQR to \texttt{YVRprecip} dataset analyzed by \cite{cannon2018non} and \cite{moon2021learning}. The \texttt{machine} dataset comes from a real experiment that collected the excitation current ($Y$) and four machine attributes ($\bm{X}$) for a set of synchronous motors \citep{kahraman2014metaheuristic}. The \texttt{fish} dataset contains the concentration of aquatic toxicity ($Y$) that can cause death in fathead minnows and six molecular descriptors ($\bm{X}$) described in \cite{cassotti2015similarity}. The \texttt{noise} dataset measures the scaled sound pressure in decibels ($Y$) at different frequencies, angles of attacks, wind speed, chord length, and suction side displacement thickness ($\bm{X}$) for a set of airfoils \citep{lopez2008neural}. 
Finally, the \texttt{YVRprecip} dataset collects daily precipitation totals (mm) at Vancouver International Airport from 1971 to 2000. The five covariates in \texttt{YVRprecip} are seasonal cycle, daily sea-level pressures, 700-hPa specific humidities, and 500-hPa geopotential heights \citep{cannon2018non, moon2021learning}. The \texttt{noise} and \texttt{YVRprecip} datasets display extreme skewness in the responses, rendering them especially challenging for conditional quantile and conditional density estimation.

% The \texttt{machine} dataset comes from a real experiment that collected the excitation current ($Y$) and four machine attributes ($\bm{X}$) for a set of synchronous motors \citep{kahraman2014metaheuristic}. The \texttt{fish} dataset contains the concentration of aquatic toxicity ($Y$) that can cause death in fathead minnows and six molecular descriptors ($\bm{X}$) described in \citep{cassotti2015similarity}. The \texttt{noise} dataset measures the scaled sound pressure in decibels ($Y$) at different frequencies, angles of attacks, wind speed, chord length, and suction side displacement thickness ($\bm{X}$) for a set of airfoils \citep{lopez2008neural}. Table \ref{tab:real} reports the sample size $n$ and covariate dimension $p$ for these three benchmark datasets.

\begin{table}[t]
	\centering
	\resizebox{0.8\columnwidth}{!}{%
		\begin{tabular}{ccc|cc|cc|cc|cc}
			\hline
			&&&\multicolumn{2}{c|}{PGQR}&\multicolumn{2}{c|}{GCDS}&\multicolumn{2}{c|}{deep-GCDS}&\multicolumn{2}{c}{WGCS}\\
			\hline
			Dataset&$n$&$p$&Cov&Width&Cov&Width&Cov&Width&Cov&Width\\
			\hline
			\hline
			machine&557 &4 &\textbf{0.96} &2.82&0.91 &1.92&0.87&1.52&0.69&1.66\\
			fish&908&6 &\textbf{0.96} &3.29&0.79 &2.38&0.54&1.08&0.80&2.36\\
			noise&1503&5 &\textbf{0.92} &7.58 &0.00&1.68&0.00&0.24&0.00&22.41 \\
            YVRprecip&10958&5 &\textbf{0.94} &12.6 &0.73&3.62&0.89&16.1&0.06&2.49 \\
			\hline
	\end{tabular}}
	\vspace{-0.1cm}
	\caption{\small Results from our real data analysis. Cov and width denote the coverage rate and average width respectively of the 95\% prediction intervals for the test observations.}
	\label{tab:real}
\end{table}

We examined the out-of-sample performance for PGQR (with fixed $\alpha = 1$), GCDS, deep-GCDS, and WGCS. In particular, 80\% of each dataset was randomly selected as training data, 10\% was used as validation data for tuning parameter selection, and the remaining 10\% was used as test data for model evaluation. To compare these deep generative methods, we considered the out-of-sample coverage rate (Cov) and the average width of the 95\% prediction intervals in the test data. 

The results from our real data analysis are summarized in Table \ref{tab:real}. On all four datasets, PGQR achieved higher coverage that was closer to the nominal rate than the competing methods. It seems as though the other generative approaches may have been impacted by the vanishing variability phenomenon, resulting in too narrow prediction intervals that did not cover as many test samples. In particular, GCDS, deep-GCDS, and WGCS all performed very poorly on the \texttt{noise} dataset, with an out-of-sample coverage rate of zero. On the other hand, with the help of the variability penalty \eqref{eq:pen}, PGQR did \textit{not} underestimate the variance and demonstrated an overwhelming advantage over these other methods in terms of predictive power. Moreover, the average widths of the PGQR prediction intervals were not overwhelmingly large so as to be uninformative.

Additional illustrations and data analyses are provided in Appendix \ref{sec:real-data}. In particular, Appendix \ref{sec:real-data} presents a clinically important application of PGQR for predicting muscular strength in older adults.

%Figure \ref{fig:ci} plots the PGQR 95\% prediction intervals for the test observations ordered in ascending order. The red crosses depict test samples that were \textit{not} captured by their corresponding PGQR prediction intervals. We do not detect any specific pattern for the uncaptured test points, indicating reasonable generalizability for the PQGR model.

%\begin{figure}[t]
%	\centering
	%\includegraphics[width=.95\textwidth]{}
	%\caption{\small The PGQR 95\% prediction intervals for the test samples in the three benchmark datasets. The red crosses indicate the test points that were not captured by their corresponding prediction intervals.}
	%\label{fig:ci}
%\end{figure}

\section{Discussion} \label{sec:conclusion}

In this paper, we have made contributions to both the quantile regression and deep learning literature. Specifically, we proposed PGQR as a new deep generative approach to joint quantile estimation and conditional density estimation. Different from existing conditional sampling methods \citep{zhou2022deep, liu2021wasserstein}, PGQR employs a novel variability penalty to counteract the \textit{vanishing variability} phenomenon in deep generative networks. We introduced the PMNN family of neural networks to enforce the monotonicity of quantile functions. Finally, we provided a scalable implementation of PGQR which requires solving only a single optimization to select the regularization term in the variability penalty. Through analyses of real and simulated datasets, we demonstrated PGQR's ability to capture critical aspects of the conditional distribution such as multimodality, heteroscedasticity, and skewness. 

We anticipate that our penalty on vanishing variability in generative networks is broadly applicable for a wide number of loss functions besides the check function. In the future, we will extend the variability penalty to other deep generative models and other statistical problems besides quantile regression. In addition, we plan to pursue variable selection, so that our method can also identify the most relevant covariates when the number of features $p$ is large. Owing to its single-model training, PGQR is scalable for large $n$. However, further improvements are needed in order for PGQR to avoid the curse of dimensionality for large $p$.

\section*{Acknowledgments}
We are grateful to the two anonymous reviewers whose thoughtful feedback helped us greatly improve our paper. We also thank Dr. Jun Liu from Harvard University for his helpful comments. We also thank Dr. Jun Liu from Harvard University for his helpful comments. The last author was generously supported by NSF grant DMS-2015528.

%\sffamily
%\halfspacing
%\baselinestretch{1.0in}
% \setstretch{0.1}
\bibliographystyle{apalike}
    \bibliography{{PGQR-ref}}%bibliography{myReference.bib}
%\printbibliography

\newpage 

\appendix

\section{Additional Illustrations and Real Data Analyses}\label{sec:real-data}

\subsection{Illustration: Takeuchi's Example} \label{sec:illustration}

A popular illustrative example in the literature for nonparametric quantile estimation was given by \cite{takeuchi2006nonparametric} (henceforth known as Takeuchi's example), where
%---------------------------
\begin{align*}\label{eq:tajeuchi}
	Y_i = \sin(\pi X_i)/(\pi X_i) + \epsilon_i, \quad i = 1,\ldots, n,
\end{align*}
%----------------------------
with $x_i \sim \text{Uniform}(-1, 1)$ and $\epsilon_i \sim \mathcal{N} (0, 0.1\exp(1-X_i))$.
By \cite{takeuchi2006nonparametric}, the true $\tau$th conditional function, $\tau \in (0,1)$, is $f_{\tau}(X)=\sin(\pi X)/(\pi X)+0.1\exp(1-X)\Phi^{-1}(\tau)$, where $\Phi^{-1}(\cdot)$ denotes the inverse cumulative distribution function (cdf) of a standard Gaussian distribution. 
Moreover, the conditional distribution of $Y$ given $X$ is $Y \mid X \sim \mathcal{N}(\sin(\pi X)/(\pi X), 0.1\exp(1-X))$. 

We made an artificial dataset of size $n=2000$ for Takeuchi's example and applied our proposed PGQR method to it.
%-------------------------
\begin{figure}[t!]
	\centering
	\includegraphics[width=1.0\textwidth]{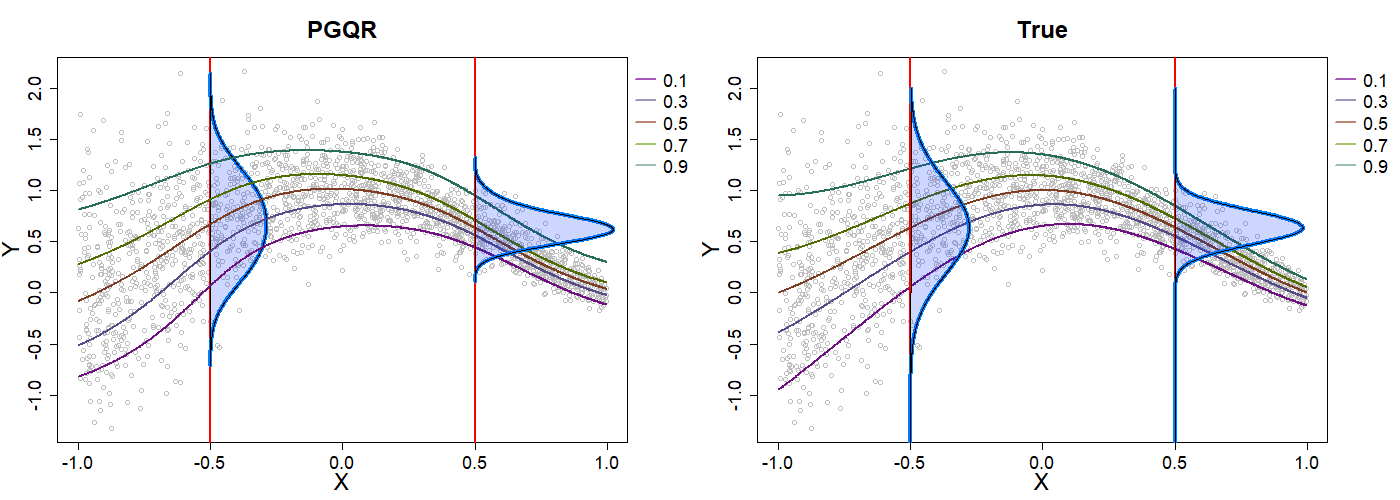}
	\caption{\small Using PGQR to model the conditional densities $p(Y\mid X=0.5)$ and $p(Y\mid X=-0.5)$. The conditional quantile functions at levels \{0.1, 0.3, 0.5, 0.7, 0.9\} are also displayed.}
    \vspace{-0.2cm}
	\label{fig:take}
\end{figure}
%-----------------------------
Specifically, we aimed to estimate the conditional densities $P(Y\mid X=-0.5)$ and $P(Y\mid X=0.5)$. The left panel of Figure \ref{fig:take} depicts the conditional density estimates for PGQR, which are almost identical to the true conditional densities (right panel of Figure \ref{fig:take}).
In addition, we estimated the conditional quantile function $f_{\tau}(x)$ at quantile levels $\tau=\{0.1, 0.3, 0.5, 0.9 \}$.
From Figure \ref{fig:take}, we observe that by comparing results of PGQR to the true quantile functions, the mid-range quantile levels $\{0.3, 0.5, 0.7 \}$ are estimated quite well. However, for the quantile levels $\{0.1, 0.9 \}$, PGQR exhibits a slight departure from truth, which is more obvious when $X$ is close to the boundary of the covariate domain (-1 and 1). Quantile levels very close to zero or one are inherently more challenging to estimate because there is less data in these regions.

\subsection{Clinical Application: Discovering Hidden Subpopulations} \label{sec:Motivation}

PGQR aims to simultaneously generate samples from multiple conditional quantiles $Q_{Y \mid \bm{X}} (\tau)$ of $p(Y \mid \bm{X})$ at different quantile levels $\tau \in (0,1)$. An automatic byproduct of joint \textit{nonparametric} quantile regression (as opposed to linear quantile regression) is that if the conditional quantiles $Q_{Y \mid \bm{X}} (\tau)$ are estimated well for a large number of quantiles, then we can also infer the \textit{entire} conditional distribution for $Y$ given $\bm{X}$. 

\begin{figure}[t]
	\centering
	\includegraphics[width=.9\textwidth]{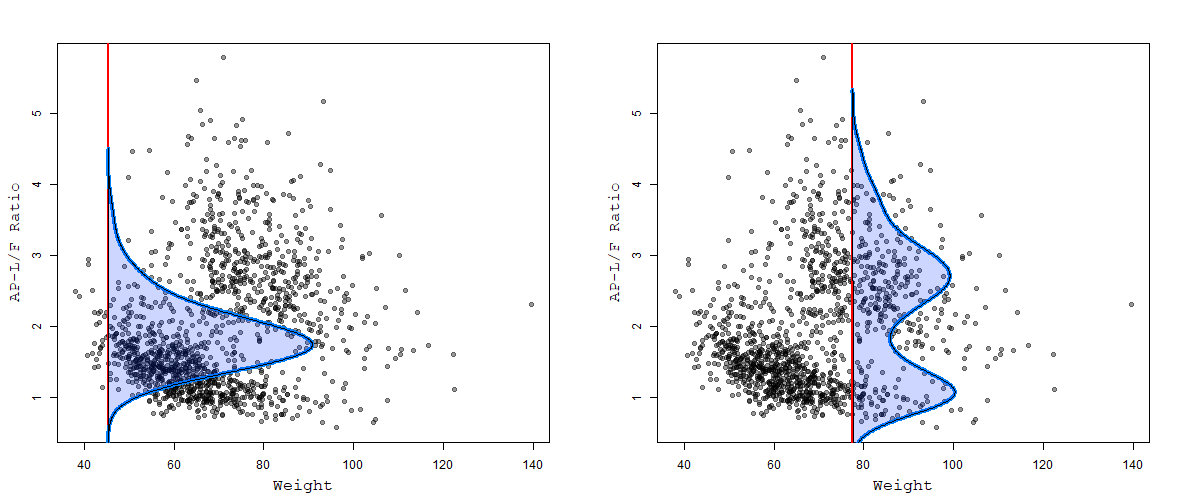}
	\caption{\small Using PGQR to model the conditional density of AP-L/F ratio given weight in older adults. Left panel: Weight = 45.4 kg, Right panel: Weight = 77.5 kg.}
	\label{fig:bimodal}
\end{figure}

To demonstrate the clinical utility of our method, we apply our proposed PGQR method to a real dataset on body composition and strength in older adults \citep{RoyChoudhury2020}. The data was collected over a period of 12 years for 1466 subjects as part of the Rancho Bernardo Study (RBS), a longitudinal observational cohort study. We are interested in modeling the appendicular lean/fat (AP-L/F) ratio, i.e.
\begin{align*}
	\text{AP-L/F Ratio} = \frac{\text{Weight on legs and arms}}{\text{Fat weight}},
\end{align*}
as a function of weight (kg). Accurately predicting the AP-L/F ratio is of practical clinical interest, since the AP-L/F ratio provides information about limb tissue quality and is used to diagnose sarcopenia (age-related, involuntary loss of skeletal muscle mass and strength) in adults over the age of 30 \citep{Evans2010AJCN, Scafoglieri2017}. 

Figure \ref{fig:bimodal} plots the approximated conditional density of AP-L/F ratio given weight of 45.4 kg (left panel) and 77.5 kg (right panel) under the PGQR model. We see evidence of data heterogeneity (actually, depending on an unobserved factor of gender), as the estimated conditional density is unimodal when the weight of older adults is 45.4 kg but \textit{bimodal} when the weight of older adults is 77.5 kg. In short, our method \textit{discovers} the presence of two heterogeneous subpopulations of adults that weigh around 78 kg. In contrast, mean regression (e.g. simple linear regression or nonparametric mean regression) of AP-L/F ratio given weight might obscure the presence of two modes and miss the fact that weight affects AP-L/F ratio differently for these two clusters of adults.

\section{More Simulation Results} \label{sec:sims}

\subsection{Additional Simulation Studies}\label{sec:sim-more}

In addition to the three simulations described in Section \ref{subsec:construction}, we also conducted simulation studies under the following scenarios: 
\begin{itemize}
	\item \noindent{\bf Simulation 4: Nonlinear function with an interaction term and one irrelevant covariate.} $Y_i = 0.5 \log(10-X_{i1}^2)+0.75 \exp(X_{i2} X_3/5)-0.25 \vert X_{i4}/2 \vert + \epsilon_i$, where $\epsilon_i \sim \mathcal{N}(0,1)$. Note that there is a (nonlinear) interaction between $X_2$ and $X_3$, while $X_5$ is irrelevant.
	\item \noindent{\bf Simulation 5: Very small conditional variance.} $Y_i = \beta X_i + \epsilon_i, i = 1, \ldots, n$, where $\beta = 1$ and $\epsilon_i \overset{iid}{\sim} \mathcal{N}(0, 0.01)$. 
    \item \noindent{\bf Simulation 6: Error term dependent on norm of predictor $\bm{X}$.} $Y_i = \bm{X}_i^{\top} \boldsymbol{\beta} + \epsilon_i$, where $\boldsymbol{\beta} \in \mathbb{R}^5$ is equispaced between $[-2, 2]$, $\bm{X}_i \sim \text{Uniform}[-1,1]^5$ and $\epsilon_i \sim \mathcal{N}(0, \exp(0.5\vert X_i\Vert_1))$. 
\end{itemize}
The results from these three simulations averaged across 20 replicates are shown in Table \ref{tab:supp-simulation}. 
%The average PMSE for the $\tau$-th quantile levels $Q_{Y \mid \bm{X}} (\tau)$, $\tau \in \left\{ 0.1, 0.2, 0.3, 0.4, 0.5, 0.6, \right.$ $\left. 0.7, 0.8, 0.9 \right\}$ are plotted in Figure \ref{fig:simu-quantile-appendix}. 

\begin{table}[t]
	\centering
	\resizebox{1.0\columnwidth}{!}{
		\begin{tabular}{c|ccc|ccc|ccc}
			\hline
			&\multicolumn{3}{c|}{Simulation 4}&\multicolumn{3}{c}{Simulation 5}&\multicolumn{3}{c}{Simulation 6}\\
			\hline
			Method&$\mathbb{E}(Y \mid \bm{X})$&$\text{sd}(Y \mid \bm{X})$& Cov (Width)&$\mathbb{E}(Y \mid \bm{X})$&$\text{sd}(Y \mid \bm{X})$&Cov (Width)&$\mathbb{E}(Y \mid \bm{X})$&$\text{sd}(Y \mid \bm{X})$&Cov (Width)\\
			\hline
			\hline
			PGQR ($\alpha=1$)
			&0.15&0.07&0.95 (4.33)&0.004&\textbf{0.0001}&0.93 (0.42)&7.20&57.75&0.79(17.72)
			\vspace{0.05cm}\\
			PGQR ($\alpha=5$)
			&\textbf{0.14}&0.08&\textbf{0.96 (4.50)}&0.005&0.03&\textbf{0.99 (1.01)}&7.20&56.74&0.79(18.80)
			\vspace{0.05cm}\\
			GCDS
			&0.23&0.05&0.87 (3.39)&\textbf{0.002}&0.0011&0.73 (0.27)&7.44&64.63&\textbf{0.80}(18.41)
			\vspace{0.05cm}\\
			deep-GCDS
			&0.43&0.20&0.76 (2.88)&0.004&0.0022&\textbf{0.99 (0.56)}&11.9&85.60&0.61(13.46)
			\vspace{0.05cm}\\
			WGCS
			&0.92&0.15&0.79 (4.41)&0.640&0.2372&0.70 (1.15)&9.58&93.52&0.51(11.54)
			\vspace{0.05cm}\\
			\hline
			\hline
			FlexCoDE-NNR
			&0.23&0.003&0.92 (3.82)&0.069&0.0168&0.89 (0.27)&\textbf{1.57}&\textbf{35.46}&0.37(8.38)
			\vspace{0.05cm}\\
			FlexCoDE-SAM
			&0.23&\textbf{0.002}&0.93 (3.83)&0.043&0.0130&0.93 (4.04)&1.76&50.72&0.09(5.56)
			\vspace{0.05cm}\\
			FlexZBoost
			&0.45&0.06&0.82 (3.50)&1.244&0.2824&0.81 (4.13)&19.1&40.36&0.58(15.5)
			\vspace{0.05cm}\\
			RFCDE
			&0.24&0.003&0.94 (3.97)&0.067&0.0292&0.92 (3.97)&3.12&67.33&0.28(6.61)
			\\
			\hline
	\end{tabular}}
	\vspace{-0.3cm}
	\caption{\small  Table reporting the PMSE for the conditional expectation and standard deviation, as well as the coverage rate (Cov) and average width of the 95\% prediction intervals, for Simulations 4 through 6. Results were averaged across 20 replicates.}
	\label{tab:supp-simulation}
	
\end{table}

%\begin{figure}[H]
%	\centering
	%\includegraphics[width=.9\textwidth]{}
	%\caption{\small Plot of the average PMSE across 20 replicates for the $\tau$-th quantiles $Q_{Y \mid \bm{X}} (\tau)$, $\tau \in \left\{ 0.1, 0.2, 0.3, 0.4, 0.5, 0.6, 0.7, 0.8, 0.9 \right\}$ for Simulations 4 and 5. In Simulation 5, the PMSE for WGCS is not displayed, as it lies outside of the plot area (i.e. the average PMSE for WGCS is much larger than that of the other methods).}
	%\label{fig:simu-quantile-appendix}
%\end{figure}

One may be concerned whether the regularized PGQR overestimates the conditional variance when the true conditional density has a very \textit{small} variance. To illustrate the flexibility of PGQR, Figure \ref{fig:simu-smallvar} plots the estimated conditional densities for three test observations from one replication of Simulation 5. Recall that in Simulation 5, the true conditional variance is very small ($\sigma^2 = 0.01$). With the optimal $\lambda^{\star}$ selected using the method introduced in Section \ref{sec:select}, Figure \ref{fig:simu-smallvar} shows that the estimated PGQR conditional density \textit{still} manages to capture the Gaussian shape while matching the true variance of 0.01. 
%{\color{blue}
%Moreover in Table \ref{tab:supp-simulation}, the PMSE between estimated standard deviation from PGQR with $\alpha=1$ is and truth is only 0.0001 which indicates PGQR does not over-estimate the small variance.
%It is noted that PGQR with $\alpha=5$ has a larger PMSE of standard deviation (around 0.03) than $\lapha=1$ (0.0001).
%}
If the true conditional variance is very small (as in Simulation 5), then PGQR selects a tiny $\lambda^{\star} \approx 0$. In this scenario, PGQR only applies a small amount of variability penalization and thus does not overestimate the variance. 

In Simulation 6, we investigated the especially challenging case when the error term $\epsilon$ is dependent on $\ell_1$ norm of predictor $\bm{X}$.
This simulation setting is inspired by simulation M2 in \cite{moon2021learning} and is a variant of an example from Appendix A of \cite{takeuchi2006nonparametric}.
In Figure \ref{fig:simu-norm}, we see that the variance of true conditional distribution varies with $\Vert \bm{X} \Vert_1$.
We observe that PGQR was able to capture of true conditional distribution in some cases where $\Vert \bm{X} \Vert_1$ is not large (e.g. $\Vert \bm{X} \Vert_1 < 4$).
However, PGQR ($\alpha=1$) is incapable of estimating very large variance in the third graph (upper panel) when $\Vert \bm{X} \Vert_1=6.4$. In this case, the true conditional density is very flat and thus difficult for all of the deep generative methods to estimate well. It is worth noting that the other state-of-the art methods, GCDS and WGCS, struggled even more than PGQR in this heavy heteroscedasticity scenario.

%----------------------------
\begin{figure}[t!]
	\centering
	\includegraphics[width=\textwidth]{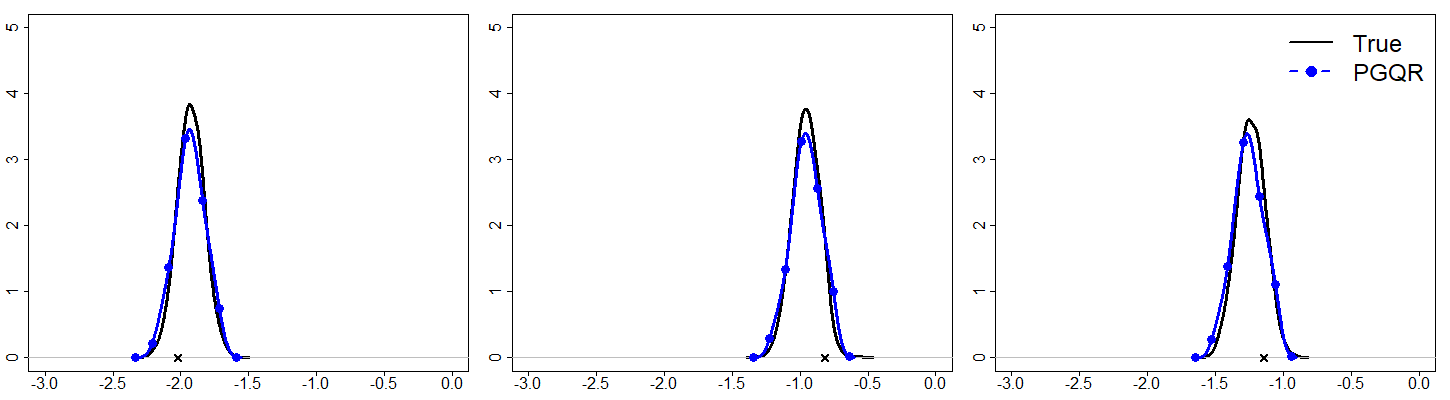}
	\caption{\small Plots of the estimated PGQR $(\alpha=1)$ conditional densities $p(Y \mid \bm{X}_{\text{test}})$ for three different test observations from one replication of Simulation 5. The optimal $\lambda^{\star}$ is chosen by our tuning parameter selection method in Section \ref{sec:select}.}
	\label{fig:simu-smallvar}
\end{figure}

\begin{figure}[H]
	\centering
	\includegraphics[width=\textwidth]{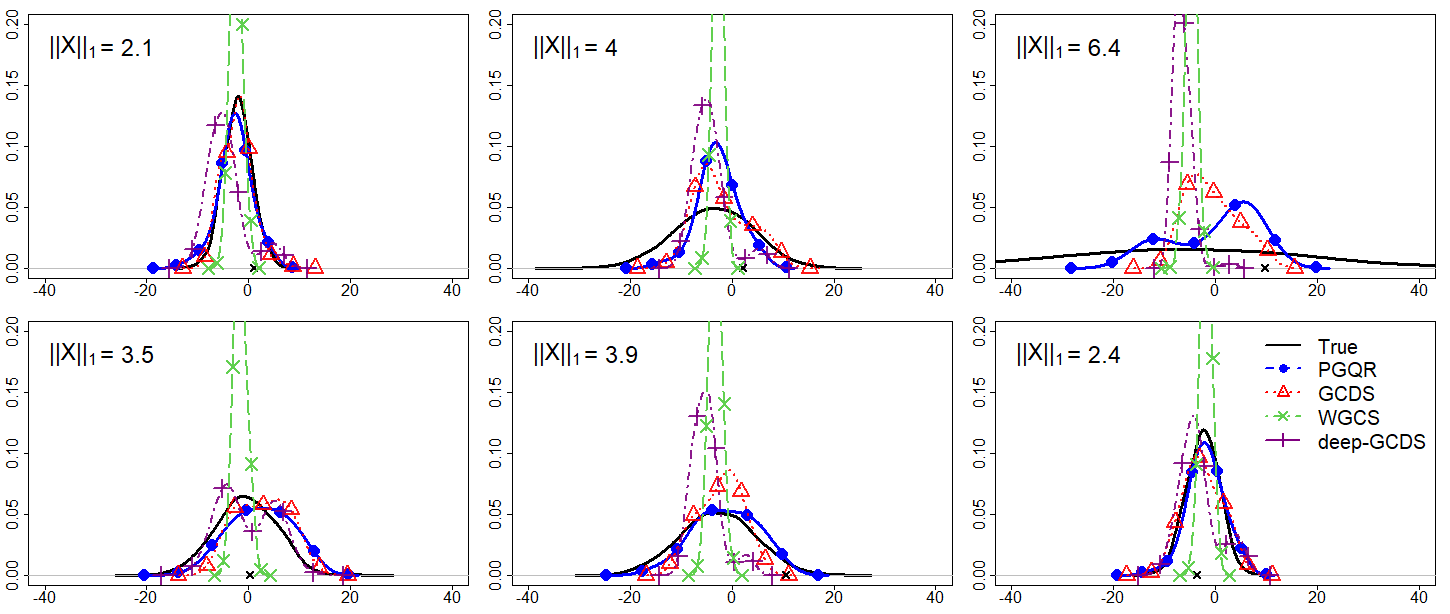}
	\caption{\small Plots of the estimated PGQR ($\alpha=1$) conditional densities $p(Y \mid \bm{X}_{\text{test}})$ for six different test observations from one replication of Simulation 6.}
	\label{fig:simu-norm}
\end{figure}
%----------------------------

%----------------------------

%----------------------------

\subsection{Additional Figures} \label{sec:sim-morefigs}

Here, we provide additional figures from one replication each of Simulations 1, 2, and 4 (with $\alpha=1$ in PGQR). Figures \ref{fig:simu-quantile-sim1}-\ref{fig:simu-quantile-sim4} illustrate that PGQR (blue solid line with filled circles)  is better able to estimate the true conditional densities (solid black line) than GCDS, WGCS, and deep-GCDS (dashed lines). In particular, PGQR does a better job of capturing critical aspects of the true conditional distributions such as multimodality, heteroscedasticity, and skewness. 

\newpage

\begin{itemize}[leftmargin=.2in]
	\setlength\itemsep{0.2em}
	\item \noindent{\bf Simulation 1: Multimodal and heteroscedastic.}
	\begin{figure}[H]
		\centering
		\includegraphics[width=.9\textwidth]{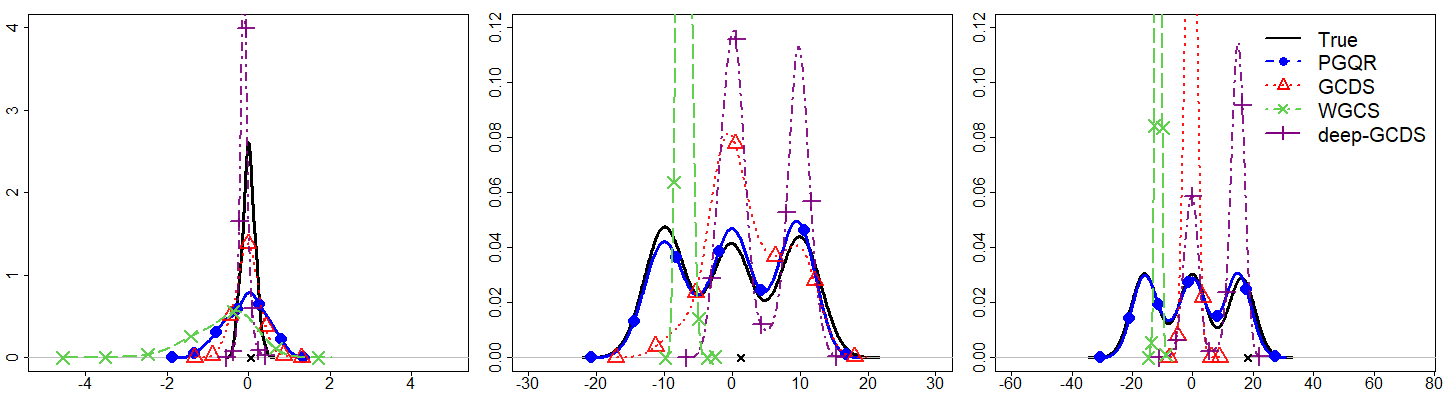}
		\caption{\small Plots of the estimated conditional densities $p(Y \mid \bm{X}_{\text{test}})$ for three different test observations from one replication of Simulation 1.}
		\label{fig:simu-quantile-sim1}
	\end{figure}

	\item \noindent{\bf Simulation 2: Mixture of left-skewed and right-skewed.}
	\begin{figure}[H]
		\centering
		\includegraphics[width=.9\textwidth]{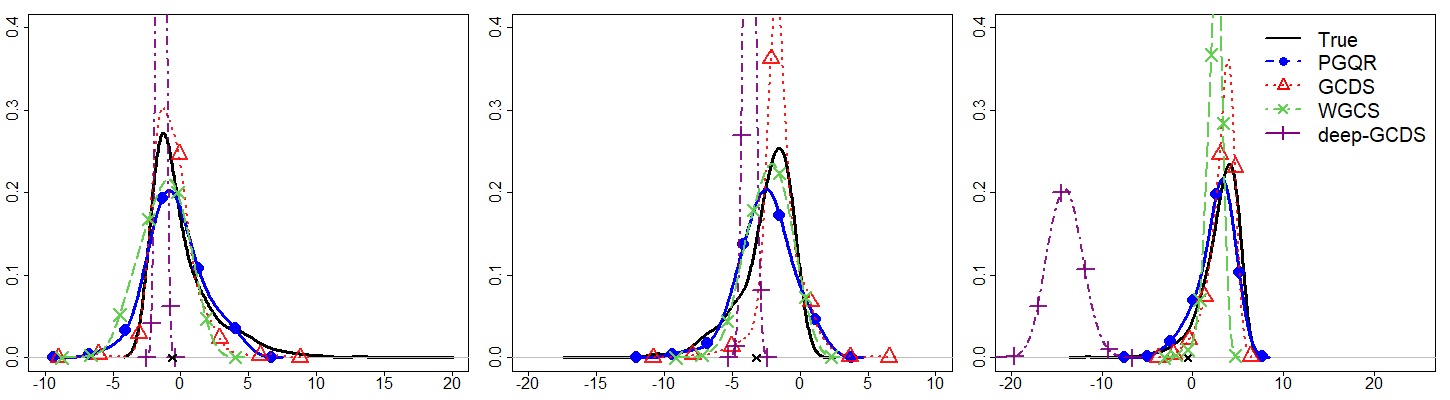}
		\caption{\small Plots of the estimated conditional densities $p(Y \mid \bm{X}_{\text{test}})$ for three different test observations from one replication of Simulation 2.}
		\label{fig:simu-quantile-sim2}
	\end{figure}    
	\item \noindent{\bf Simulation 4: Nonlinear function with an interaction term and one irrelevant covariate.}
	\begin{figure}[H]
		\centering
		\includegraphics[width=.9\textwidth]{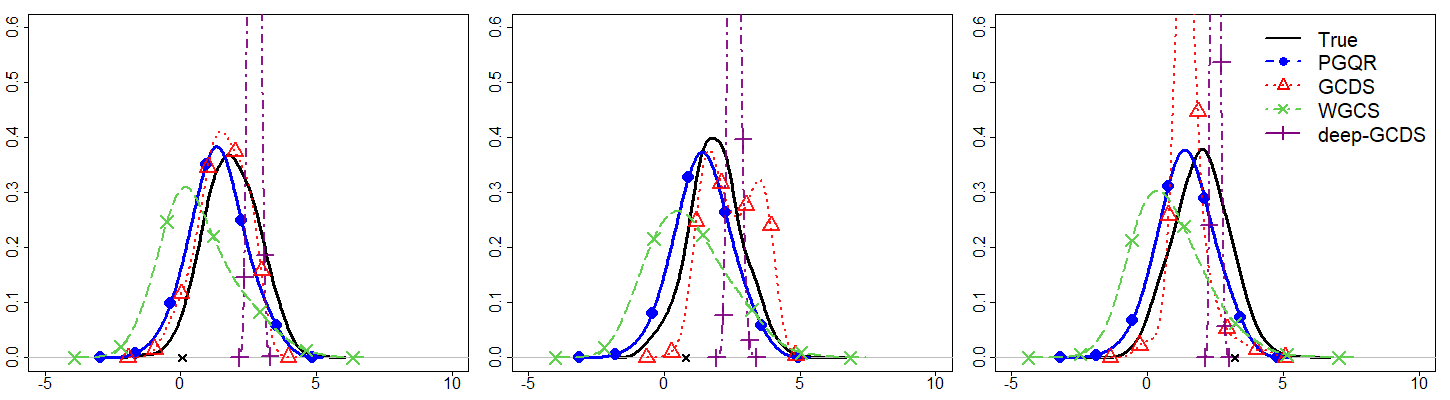}
		\caption{\small Plots of the estimated conditional densities $p(Y \mid \bm{X}_{\text{test}})$ for three different test observations from one replication of Simulation 4.}
		\label{fig:simu-quantile-sim4}
	\end{figure}
\end{itemize}

%----------------------------
\begin{figure}[H]
	\centering
	\includegraphics[width=.98\textwidth]{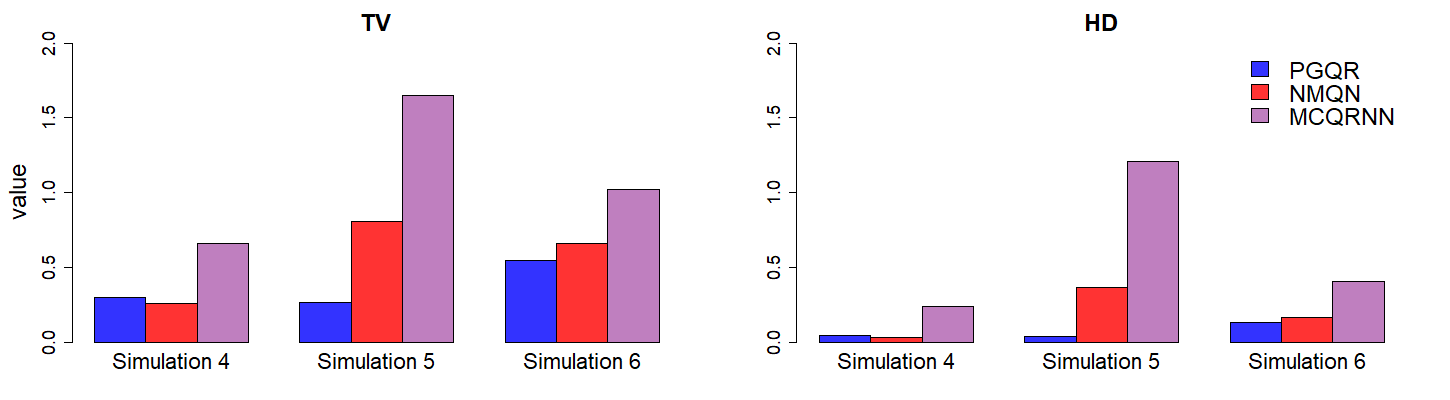}
	\caption{\small Barplots of the average total variation distance (TV) and Hellinger distance (HD) across 20 replicates evaluated at $\tau \in \{ 0.1, 0.2, 0.3, 0.4, 0.5, 0.6, 0.7, 0.8, 0.9\}$ for Simulations 4 through 6.}
	\label{fig:simu-quantile-ap}
\end{figure}
%----------------------------

Figure \ref{fig:simu-quantile-ap} depicts the quantile estimation performance of PGQR compared to $\ell_1$-p and MCQRNN in Simulations 4 through 6. In Simulation 4, $\ell_1$-p performed slightly better than PGQR, but the performance of the two methods was very comparable. However, in Simulations 5 and 6, PGQR outperformed $\ell_1$-p, with lower average total variation distance (TV) and Hellinger distance (HD). Both PGQR and $\ell_1$-p outperformed MCQRNN in Simulations 4 through 6.

\section{Proofs of Propositions} \label{sec:proofs}

{\bf Proof of Proposition \ref{prop:PGQRexistence}.}
Suppose that for some $\epsilon>0$, there exists a set $\mathcal{C}\subset (0,1)$ with $P_\tau(\mathcal{C})>\epsilon$ such that for some $i^*\in\{1,\dots,n\}$, $\widehat g_\tau({\bm X}_{i^*}) \neq \widehat G({\bm X}_{i^*},\tau)$ for all $\tau\in \mathcal{C}$. Then we can construct another optimal generator $\widetilde G$ such that 
\begin{eqnarray*}
	&&\frac{1}{n} \sum_{i=1}^n \mathbb{E}_{\tau} \left[ \rho_{\tau} \big(y_i - \widehat G(\bm{X}_i, \tau) \big)\right] + \mathbb{E}_{\widetilde\tau, \widetilde\tau^\prime} \left\{\text{pen}_{\lambda, \alpha}\left( \widehat G(\bm{X}_i,\widetilde\tau), \widehat G(\bm{X}_i, \widetilde\tau^\prime) \right) \right\}\\
	&\geq& \frac{1}{n} \sum_{i=1}^n \mathbb{E}_{\tau} \left[ \rho_{\tau} \big(y_i - \widetilde G(\bm{X}_i, \tau) \big)\right] + \mathbb{E}_{\widetilde\tau, \widetilde\tau^\prime} \left\{\text{pen}_{\lambda, \alpha}\left( \widetilde G(\bm{X}_i,\widetilde\tau), \widetilde G(\bm{X}_i, \widetilde\tau^\prime) \right) \right\},
\end{eqnarray*}
where $\widetilde G(\bm{X}_{i^*},\tau) = \widehat G(\bm{X}_{i^*},\tau)$ for $\tau\not\in \mathcal{C}$ and $\widetilde G(\bm{X}_{i^*},\tau) = \widehat g_\tau(\bm{X}_{i^*})$ for $\tau\in \mathcal{C}$. This is a contradiction due to the fact that $\widehat g_\tau$ is the minimizer of $\frac{1}{n} \sum_{i=1}^n \rho_{\tau} (y_i - \widehat G(\bm{X}_i, \tau) ) + \mathbb{E}_{\widetilde\tau, \widetilde\tau^\prime} \{ \text{pen}_{\lambda, \alpha} ( \widehat G(\bm{X}_i,\widetilde\tau), \widehat G(\bm{X}_i, \widetilde\tau^\prime) ) \}$. \qed \\

\noindent {\bf Proof of Proposition \ref{prop:nomemorization}.}  Suppose that for all $\lambda \geq 0$, all $\tau \in (0,1)$, and all $i\in \{1,\dots,n\}$, 
\begin{eqnarray*}
	\widehat G({\bm X}_i,\tau)=Y_i.
\end{eqnarray*}
Then the penalty part in the PGQR loss \eqref{eq:PGQR} attains $\mathbb{E}_{\tau, \tau^\prime} \{ \text{pen}_{\lambda, \alpha} ( \widehat G(\bm{X}_i,\tau), \widehat G(\bm{X}_i, \tau^\prime) ) \}= \lambda \log(\alpha)$, while the first term in \eqref{eq:PGQR} satisfies $\mathbb{E}_{\tau} [ \rho_{\tau} \big(y_i - \widehat G(\bm{X}_i, \tau) \big)]=0$. As a result, the total loss is $\lambda \log(\alpha)$. 

Since the case of $\widehat G({\bm X}_i,\tau)=Y_i$ is included in cases of $\textrm{Var}_\tau\{\widehat G({\bm X}_i,\tau)\}=0$, we focus on the variance. When there exists some $i \in \{1,\dots,n\}$ and some $\tau\in (0,1)$ such that
$$
\textrm{Var}_\tau\left\{\widehat G({\bm X}_i,\tau)\right\}>0,
$$
the resulting total loss can be made less than $\lambda\log(\alpha)$ by choosing an appropriate $\lambda>0$. This contradicts the fact that $\widehat G$ is the minimizer as in \eqref{eq:PGQR}. \qed \\

\noindent {\bf Proof of Proposition \ref{prop:lambda}.} 
It is trivial that $F_Q(W) := P(Q\leq W\mid W) \sim \text{Uniform}(0,1)$ when $Q\overset{d}=W$. Without loss of generality, we assume that $F_Q$ is invertible. If $F_Q$ is not invertible, then we replace $F_Q^{-1}(W)$ below with $F_Q^{-}(W) := \inf \{ x: F_Q(x) \geq W \}$, and the result still holds. We shall show that $F_Q(W)\sim \text{Uniform}(0,1)$ implies that the two distributions for $Q$ and $W$ are identical. Suppose that $F_Q(W)$ follows a standard uniform distribution. Then,
\begin{equation*}
	x = P(F_Q(W)<x) = P(W < F_Q^{-1}(x)) = F_Q(F_Q^{-1}(x)),
\end{equation*}
which implies that $F_W(F_W^{-1}(x)) = x$. Thus, it follows that the distributions of $Q$ and $W$ are identical. \qed 

\section{Analyses of Model Complexity and Choice of $\alpha$}
\label{sec:furtheranalysis}

\subsection{Model Complexity Analysis}\label{sec:model-complex}

The estimated conditional quantile function $\widehat{G}(\bm{X}, \tau)$ depends on estimation of two sub-networks $g_c(\tau)$ and $g_{uc}(\bm{X})$.
To avoid vanishing variability \eqref{eq:overfit}, we proposed a novel variability penalty \eqref{eq:pen} which essentially controls $\lVert \partial G(\bm{X}, \tau) / \partial \tau \rVert$.
However, the estimated $\widehat{G}(\bm{X}, \tau)$ also depends on model complexity $\lVert \partial G(\bm{X}, \tau) / \partial \bm{X} \rVert$. This indicates that tuning the model complexity of $g_{uc}(\bm{X})$ might be another potential way to solve the vanishing variability problem. 

In Section \ref{sec:pen}, we conducted a simple simulation study under the model, $Y_i = \bm{X}_i^\top \boldsymbol{\beta} + \epsilon_i, i = 1, \ldots, n$,
where $\bm{X}_i \overset{iid}{\sim} \mathcal{N}(\boldsymbol{0},\bm{I}_{20})$, $\epsilon_i \overset{iid}{\sim} \mathcal{N}(0,1)$, the coefficients in $\boldsymbol{\beta}$ are equispaced over $[-2, 2]$, and the sample size is $n=2000$. Figure \ref{fig:overfit} illustrated that GQR (without any variability penalty) can encounter severe vanishing variability -- essentially collapsing to a point mass -- even when dealing with a simple linear model. We also showed in Section \ref{sec:pen} that PGQR (i.e. GQR with a variability penalty function) avoids the vanishing variability by controlling $\lVert \partial G( \bm{X}, \tau) / \partial \tau \rVert$.
In this example, both GQR and PGQR were constructed by a feedforward neural network (FNN) with three hidden layers and 1000 neurons per hidden layer. Here, GQR is heavily overparameterized, i.e. the number of learnable parameters in the FNN is much larger than the number of training samples. 
As discussed in Section \ref{sec:pen}, the main motivation for overparameterization is that it improves the generalization and robustness of the model \citep{allen2019learning, zhang2021understanding, Soltanolkotabi2019, MontanariZhong2022}. But in the case of nonparameteric quantile estimation, overparameterization can also lead to more severe vanishing variability. This is because the GQR loss \eqref{eq:GQR} actually achieves the minimum value of zero when $\widehat{G}(\bm{X}_i, \tau) = Y_i, i = 1, \ldots, n$, and a very complex model is likely to perfectly interpolate the observed responses.

Therefore, instead of using a variability penalty on an overparameterized model, it is also very natural to consider controlling the model complexity $\lVert \partial G(\bm{X}, \tau) / \partial \bm{X} \rVert$ by simply choosing a simpler FNN structure $g_{uc}(\bm{X})$. A simpler model would not be overparameterized, and therefore, it is a promising alternative way to avoid vanishing variability. To investigate whether tuning model complexity indeed helps to avoid this phenomenon, we considered two different (simpler) FNN settings for the simple linear example that we presented in Section \ref{sec:pen}. We applied these simple FNN structures to (non-penalized) GQR so we could see the impact of $\lVert \partial G(\bm{X}, \tau) / \partial \bm{X} \rVert$ on vanishing variability.

To be more specific, we denote the model $\text{GQR}_1$ as (non-penalized) GQR fit with a simple FNN with two hidden layers, each having 50 hidden neurons.
This setting is similar to the network structure in \cite{zhou2022deep}.
The model $\text{GQR}_2$ is (non-penalized) GQR fit with an even simpler FNN architecture with only one hidden layer and five hidden neurons.
This setting is similar to that considered by \cite{cannon2018non} and \cite{moon2021learning}. We compared $\text{GQR}_1$ and $\text{GQR}_2$ to PGQR on the same out-of-sample test data, where the optimal penalty parameter $\lambda^{\star}$ in PGQR was chosen according to Algorithm \ref{alg:PGQR}. These results are displayed in Figure \ref{fig:overfit2} ($\text{GQR}_1$ vs. PGQR) and Figure \ref{fig:overfit1} ($\text{GQR}_2$ vs. PGQR). 

%----------------------------
\begin{figure}[t!]
\centering
\includegraphics[width=.95\textwidth]{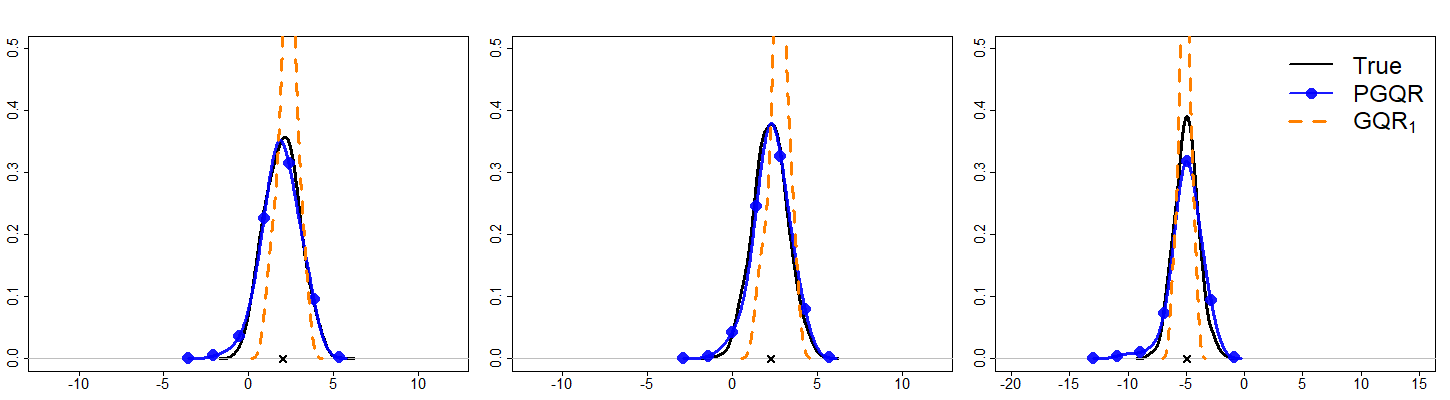}
\caption{\small Plots of the estimated conditional densities $p(Y \mid \bm{X}_{\text{test}})$ for three different test observations under $\text{GQR}_1$ and PGQR.}
\label{fig:overfit2}
\end{figure}
%----------------------------

%----------------------------
\begin{figure}[t!]
\centering
\includegraphics[width=.95\textwidth]{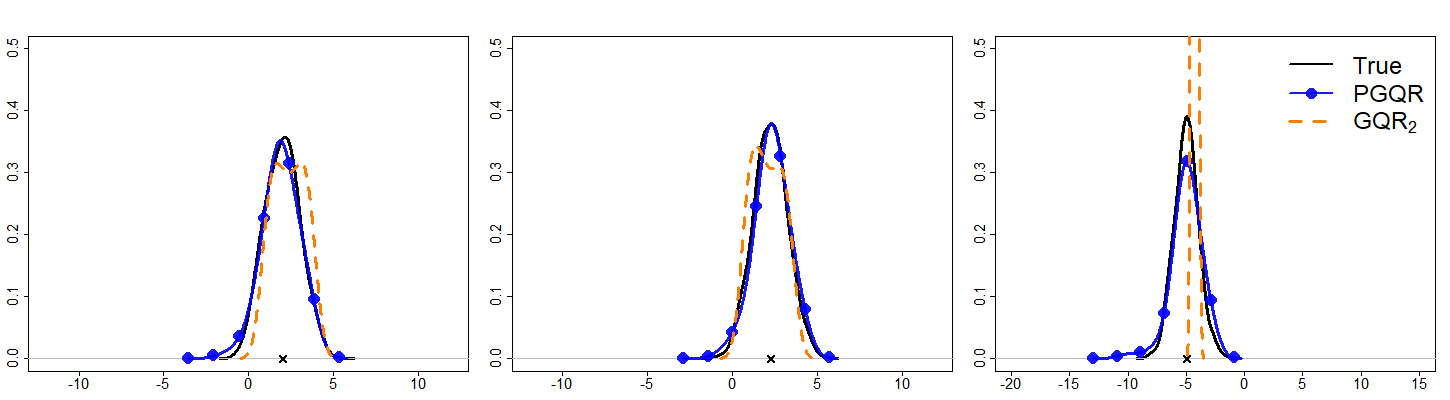}
\caption{\small Plots of the estimated conditional densities $p(Y \mid \bm{X}_{\text{test}})$ for three different test observations under $\text{GQR}_2$ and PGQR.}
\label{fig:overfit1}
\end{figure}
%----------------------------

It is obvious from Figure \ref{fig:overfit2} that the simpler model $\text{GQR}_1$ (non-penalized GQR with two hidden layers and 50 nodes per hidden layer) mitigates vanishing variability a little bit, but it still suffers from this problem by underestimating the true variance. Looking at the results for $\text{GQR}_2$ (non-penalized GQR with one hidden layer and 5 hidden nodes) in Figure \ref{fig:overfit1}, the first two plots indicate that an even simpler FNN structure helps to avoid vanishing variability phenomenon for these particular test points. However, the third plot in Figure \ref{fig:overfit1} shows that $\text{GQR}_2$ can \emph{still} encounter the vanishing variability problem for other test observations, with significant variance underestimation.

From this simple example, we can see that even though we applied an extremely shallow FNN, vanishing variability can still occur. However, reducing the FNN complexity does appear to make vanishing variability less pronounced. Although a very simple model might not result in \emph{exact} interpolation of the training labels, it nevertheless does \emph{not} entirely fix variance underestimation. In short, controlling $\lVert \partial G(\bm{X}, \tau) / \partial \bm{X} \rVert$ by tuning the FNN complexity for $g_{uc}(\bm{X})$ \emph{fails} to completely avoid vanishing variability. This simple example demonstrates the distinct need to use a variability penalty to control $\lVert \partial G(\bm{X}, \tau) / \partial \tau \rVert$ (rather than just $\lVert \partial G(\bm{X}, \tau) / \partial \bm{X} \rVert$).

Figures \ref{fig:overfit2} and \ref{fig:overfit1} also show that the overparameterized PGQR model does a better job recovering the true conditional density $p(Y \mid \bm{X})$ -- and therefore also estimates the true conditional quantile functions better -- than the (non-penalized) GQR models with simpler FNN structures $g_{uc}(\bm{X})$. This may be because a simple neural network inevitably has less expressive power than a more complex one. By using (overparameterized) PGQR with a variability penalty, we are not only free of vanishing variability, but we \textit{also} fully realize the well-known benefits of overparameterization \citep{allen2019learning, zhang2021understanding, Soltanolkotabi2019, MontanariZhong2022}.

\subsection{Sensitivity Analysis of PGQR to the Choice of $\alpha$ }\label{sec:Robustana}
As mentioned in Section \ref{sec:pen}, we choose to fix $\alpha > 0$ in the variability penalty \eqref{eq:pen}. The main purpose of $\alpha$ is to ensure that the logarithmic term in the penalty is always well-defined. In this section, we conduct a sensitivity analysis to the choice of $\alpha$. To do this, we generated data using the same settings from Simulations 1 through 6. 
We then fit PGQR with eight different choices for $\alpha \in \{ 0.5, 1, 5, 10, 20, 30, 40, 50 \}$ and evaluated the performance of these eight PGQR models on out-of-sample test data.
Table \ref{tab:robustana} shows the results from our sensitivity analysis averaged across 20 replicates. 

We found that in Simulations 1 through 4 and Simulation 6, PGQR was not particularly sensitive to the choice of $\alpha$. PGQR was somewhat more sensitive to the choice of $\alpha$ in Simulation 5 (i.e. when the true conditional variance is very small), with larger values of $\alpha$ leading to higher PMSE for the conditional expectation and conditional standard deviation. In practice, we recommend fixing $\alpha = 1$ as the default $\alpha$ for PGQR to perform well. This choice of $\alpha =1$ leads to good empirical performance across many different scenarios.

\begin{table}[H]
\centering
\begin{subtable}[t]{\linewidth}
\centering
		\resizebox{1.0\textwidth}{!}{
			%\begin{minipage}{\textwidth}
			\begin{tabular}{c|ccc|ccc|ccc|}
				\hline
				&\multicolumn{3}{c|}{Simulation 1}&\multicolumn{3}{c|}{Simulation 2}&\multicolumn{3}{c|}{Simulation 3}\\
				\hline
				$\alpha$&$\mathbb{E}(Y \mid \bm{X})$&$\text{sd}(Y \mid \bm{X})$&Cov (Width)&$\mathbb{E}(Y \mid \bm{X})$&$\text{sd}(Y \mid \bm{X})$&Cov (Width)&$\mathbb{E}(Y \mid \bm{X})$&$\text{sd}(Y \mid \bm{X})$&Cov (Width)\\
				\hline
				\hline
				0.5
                &0.39&0.41&0.95 (23.60)&0.43&0.18&0.91 (7.69)&0.36&0.09&0.89 (5.93)
				\vspace{0.05cm}\\
				1
				&0.42&0.34&0.95 (23.49)&0.38&0.11&0.93 (8.14)&0.30&0.09&0.92 (6.61)
				\vspace{0.05cm}\\
				5
				&0.36&0.31&0.95 (23.41)&0.31&0.07&0.94 (8.83)&0.25&0.06&0.96 (6.60)
				\vspace{0.05cm}\\
				10
				&0.32&0.30&0.95 (23.40)&0.29&0.06&0.95 (9.02)&0.25&0.07&0.95 (6.55)
				\vspace{0.05cm}\\
    			10
				&0.32&0.30&0.95 (23.40)&0.29&0.06&0.95 (9.02)&0.25&0.07&0.95 (6.55)
				\vspace{0.05cm}\\
    			20
				&0.32&0.27&0.95 (23.74)&0.28&0.07&0.94 (9.11)&0.22&0.06&0.95 (6.52)
				\vspace{0.05cm}\\
    			30
                &0.32&0.35&0.95 (24.20)&0.29&0.07&0.94 (9.09)&0.21&0.07&0.95 (6.45)
				\vspace{0.05cm}\\
        		40
                &0.36&0.49&0.96 (24.42)&0.28&0.07&0.94 (8.91)&0.21&0.07&0.94 (6.45)
				\vspace{0.05cm}\\
				50
				&0.33&0.34&0.97 (24.15)&0.29&0.07&0.95 (9.15)&0.25&0.06&0.95 (6.63)
				\vspace{0.05cm}\\
				\hline
			\end{tabular}
			%\end{minipage}
		}
\end{subtable}
\begin{subtable}[t]{\linewidth}
\vspace{0.2cm}
\centering
		\resizebox{1.0\textwidth}{!}{
			%\begin{minipage}{\textwidth}
			\begin{tabular}{c|ccc|ccc|ccc|}
				\hline
				&\multicolumn{3}{c|}{Simulation 4}&\multicolumn{3}{c|}{Simulation 5}&\multicolumn{3}{c|}{Simulation 6}\\
				\hline
				$\alpha$&$\mathbb{E}(Y \mid \bm{X})$&$\text{sd}(Y \mid \bm{X})$&Cov (Width)&$\mathbb{E}(Y \mid \bm{X})$&$\text{sd}(Y \mid \bm{X})$&Cov (Width)&$\mathbb{E}(Y \mid \bm{X})$&$\text{sd}(Y \mid \bm{X})$&Cov (Width)\\
				\hline
				\hline
				0.5
                &0.14&0.01&0.96 (3.95)&0.004&0.0001&0.92 (0.38)&6.88&59.65&0.78 (17.08)
				\vspace{0.05cm}\\
				1
				&0.15&0.07&0.95 (4.33)&0.004&0.0001&0.93 (0.42)&7.20&57.75&0.79 (17.72)
				\vspace{0.05cm}\\
				5
				&0.14&0.08&0.96 (4.50)&0.005&0.03&0.99 (1.01)&7.20&56.74&0.79 (18.80)
				\vspace{0.05cm}\\
				10
				&0.13&0.08&0.95 (4.39)&0.007&0.06&0.99 (1.33)&6.63&58.71&0.78 (18.14)
				\vspace{0.05cm}\\
    			20
				&0.13&0.09&0.95 (4.49)&0.007&0.08&0.99 (1.64)&6.33&59.26&0.78 (17.97)
				\vspace{0.05cm}\\
    			30
                &0.14&0.12&0.95 (4.39)&0.008&0.08&0.99 (1.69)&7.07&57.88&0.79 (18.72)
				\vspace{0.05cm}\\
        		40
                &0.13&0.07&0.95
                (4.29)&0.007&0.08&0.99 (1.69)&6.86&57.77&0.79 (18.70)
				\vspace{0.05cm}\\
				50
				&0.15&0.14&0.94 (4.42)&0.009&0.10&0.99 (1.81)&6.69&58.25&0.78 (18.55)
				\vspace{0.05cm}\\
				\hline
			\end{tabular}
			%\end{minipage}
		}
\end{subtable}
	\vspace{-0.1cm}
	\caption{\small PGQR results with different choices of $\alpha \in \{ 0.5,1.0,5.0,10.0, 20.0, 30.0, 40.0, 50.0\}$ in the variability penalty for Simulations 1 through 6. This table reports the average PMSE for the conditional expectation and standard deviation, as well as the coverage rate (Cov) and the average width of the 95\% prediction intervals. Results were averaged across 20 replicates.}
\label{tab:robustana}
\end{table}

\end{document}